\algnewcommand{\Inputs}[1]{%
  \State \textbf{Inputs:}
  \Statex \hspace*{\algorithmicindent}\parbox[t]{\linewidth}{\raggedright #1}
}
\algnewcommand{\Output}[1]{%
  \State \textbf{Output:}
  \Statex \hspace*{\algorithmicindent}\parbox[t]{\linewidth}{\raggedright #1}
}
\algnewcommand{\Initialize}[1]{%
  \State \textbf{Initialize:}
  \Statex \hspace*{\algorithmicindent}\parbox[t]{\linewidth}{\raggedright #1}
}
\algnewcommand{\comm}[1]{ {\ttfamily\textcolor{blue}{// #1}} }
\newtheorem{theorem}{Theorem}[section]
\newtheorem{lemma}[theorem]{Lemma}
\newtheorem{definition}{Definition}
\newcommand{\R}{\mathbb{R}}
\newcommand{\Ni}{\mathbb{N}}
\newcommand{\Si}{\mathcal{S}}
\newcommand{\T}{\mathcal{T}}
\newcommand{\Users}{\mathcal{U}}
\newcommand{\Pop}{\Omega}
\newcommand{\E} {\mathbb{E}}
\newcommand{\underscore}{\rule{0.5em}{0.3pt}}
\begin{document}

\title[A Zero Auxiliary Knowledge Membership Inference Attack on Aggregate Location Data]{A Zero Auxiliary Knowledge Membership Inference Attack on Aggregate Location Data}

\author{Vincent Guan*}
\orcid{0009-0005-5218-9233}
\affiliation{%
  \institution{Imperial College London}
  \city{}
  \state{}
  \country{}}
\email{vincentguan23@gmail.com}

\author{Florent Gu\'epin*}
\orcid{0009-0008-5098-0963}
\affiliation{%
  \institution{Imperial College London}
  \city{}
  \state{}
  \country{}}
\email{florent.guepin@imperial.ac.uk}

\author{Ana-Maria Cretu}
\orcid{0000-0002-9009-7381}
\affiliation{%
\institution{EPFL}
  \city{}
  \state{}
  \country{}
}
\email{ana-maria.cretu@epfl.ch}

\author{Yves-Alexandre de Montjoye}
\orcid{0000-0002-2559-5616}
\affiliation{%
 \institution{Imperial College London}
 \city{}
 \state{}
 \country{}
 }
\email{demontjoye@imperial.ac.uk}

\renewcommand{\shortauthors}{V. Guan, F. Guépin et al.}

\begin{abstract}
Location data is frequently collected from populations and shared in aggregate form to guide policy and decision making. However, the prevalence of aggregated data also raises the privacy concern of membership inference attacks (MIAs). MIAs infer whether an individual's data contributed to the aggregate release. Although effective MIAs have been developed for aggregate location data, these require access to an extensive auxiliary dataset of individual traces over the same locations, which are collected from a similar population. This assumption is often impractical given common privacy practices surrounding location data. To measure the risk of an MIA performed by a realistic adversary, we develop the first Zero Auxiliary Knowledge (ZK) MIA on aggregate location data, which eliminates the need for an auxiliary dataset of real individual traces. Instead, we develop a novel synthetic approach, such that suitable synthetic traces are generated from the released aggregate. We also develop methods to correct for bias and noise, to show that our synthetic-based attack is still applicable when privacy mechanisms are applied prior to release. Using two large-scale location datasets, we demonstrate that our ZK MIA matches the state-of-the-art Knock-Knock (KK) MIA across a wide range of settings, including popular implementations of differential privacy (DP) and suppression of small counts. Furthermore, we show that ZK MIA remains highly effective even when the adversary only knows a small fraction ($10\%$) of their target's location history. This demonstrates that effective MIAs can be performed by realistic adversaries, highlighting the need for strong DP protection.

\end{abstract}

\keywords{Location data, Membership Inference Attack, Synthetic Data}

\maketitle
\def\thefootnote{*}\footnotetext[1]{These authors contributed equally to this work.}

\section{Introduction}

\label{sec:introduction}
Human mobility and location data are widely used across many important domains, such as epidemiology~\cite{hara2021japanese, grantz2020use}, humanitarian response~\cite{yabe2022mobile}, and finance~\cite{holmes2013mobile}, as they offer insights into movement and density patterns. However, many people are concerned about the extensive collection of personal location data~\cite{van2016privacy, zhou2017understanding, hope2021millions}, particularly since this data may provide information regarding a person's social, economic, and political life~\cite{georgiadou2019location}.

Individual-level location datasets have been shown to be highly vulnerable to re-identification attacks, due to the unicity and temporal consistency of people's mobility patterns~\cite{zang2011anonymization,de2013unique,tournier2022expanding}. To address these privacy concerns, data practitioners commonly use aggregate statistics, instead of individual-level records~\cite{aktay2020google, popa2011privacy,xu2015understanding}. For example, the Public Health Agency of Canada studied citizens' movement during the COVID-19 pandemic, using aggregate location data from millions of mobile devices, provided by TELUS~\cite{OPC2023Investigation,oli2021canada}. British researchers conducted similar COVID-19 mobility analysis~\cite{jeffrey2020anonymised,trasberg2023spatial}, using aggregate location data obtained from O2 and Facebook. Because aggregate location data is often considered to be sufficiently de-identified~\cite{OPC2023Investigation}, it is commonly sold by data brokers to interested parties~\cite{NYT2021DIASurveillance, boorstein2023colorado}. Notably, the U.S. government has been criticized for using commercial aggregate location data for law enforcement purposes and military intelligence~\cite{NYT2021DIASurveillance}. Aggregate location data is also used in other sectors, such as urban design, to optimize public transit networks~\cite{morgan2021travel, Kakakhel2022Optimising, o2_transport_smart_steps_2019}, and finance, to understand consumer behaviour~\cite{safegraph_spend, precisely_placeiq_movement}.

\begin{figure}
    \centering
    \includegraphics[width = \linewidth, page=1]{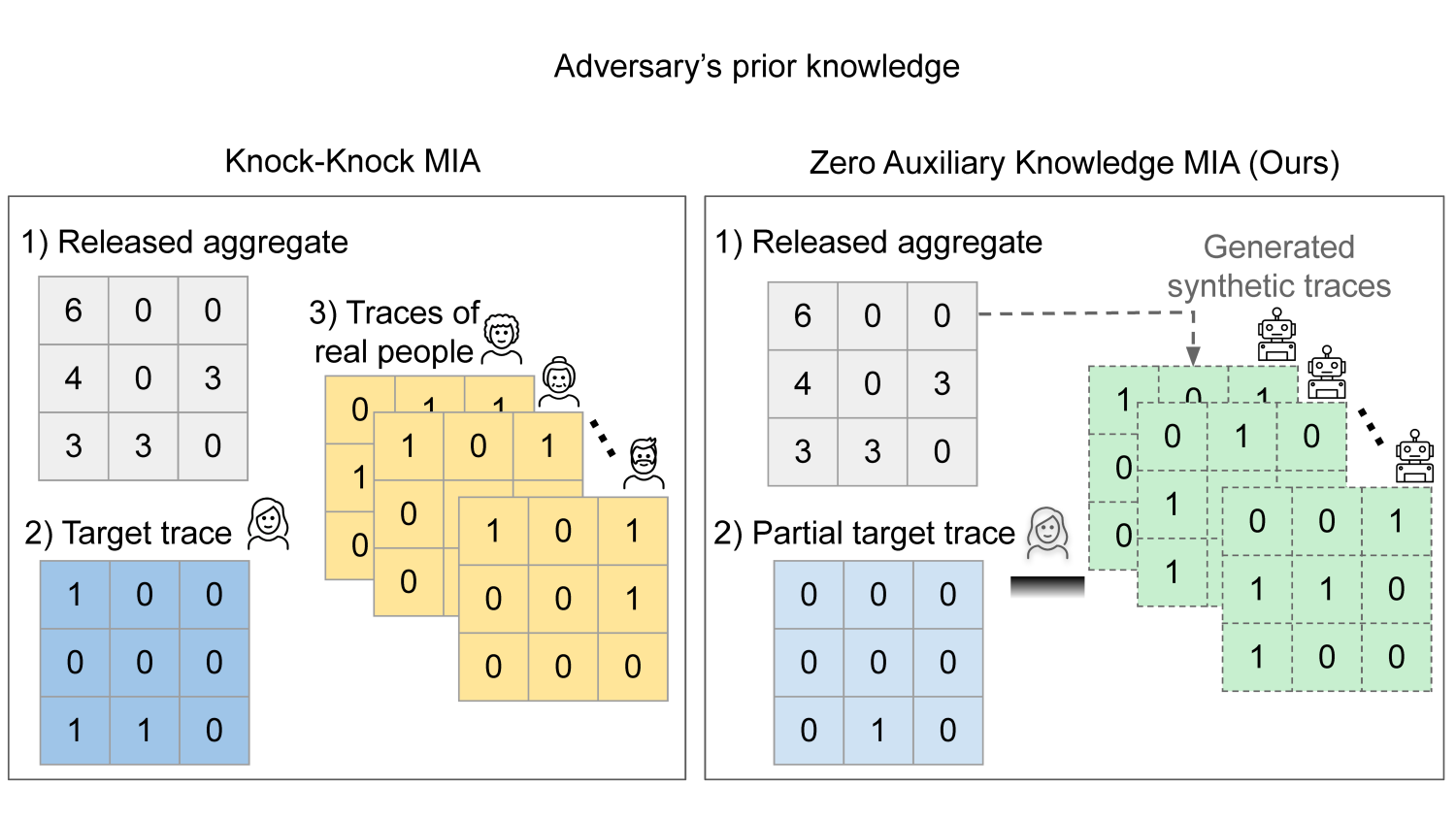}
    \caption{Adversary's prior knowledge in the previous work, Knock-Knock MIA~\cite{pyrgelis2017knock} (left), and our work, Zero Auxiliary Knowledge MIA (right). The ZK adversary does not require knowledge of location traces of real people to run the MIA.}
    \label{fig:zk_kk_priors}
\end{figure}

\textbf{Motivation. } As outlined in the E.U. Article 29 Working Party's guidance on anonymization techniques, aggregation reduces the risk of re-identification but does not eliminate all privacy risks~\cite{wp2014}. In particular, aggregates may still be vulnerable to membership inference attacks (MIAs), whose goal is to infer if an individual's data was included in the data release, e.g. aggregate data. MIAs have become the de facto standard in privacy auditing due to their practical threat model and theoretical properties. From a practical perspective, a successful MIA is a direct privacy violation whenever participation in the data release is sensitive~\cite{li2013membership}. Furthermore,  MIAs can be used as building blocks for other attacks, by first inferring a user's participation and then inferring their sensitive attributes. From a theoretical perspective, the success rate of an MIA is upper bounded following the application of differential privacy (DP)~\cite{dwork2006calibrating, yeom2018privacy,humphries2023investigating}. Hence, MIAs can be used as an auditing tool for DP implementations~\cite{jagielski2020auditing,nasr2021adversary}. Today, MIAs are widely used to assess the privacy risk of a broad range of data releases, including aggregate genetic data~\cite{homer2008resolving,sankararaman2009genomic}, aggregate survey data~\cite{bauer2020towards}, aggregate location data~\cite{pyrgelis2017knock,oehmichen2019opal}, machine learning models ~\cite{shokri2017membership,jayaraman2019evaluating,nasr2021adversary} and synthetic data releases~\cite{stadler2022synthetic,houssiau2022tapas,meeus2023achilles,guepin2023synthetic}.

MIAs pose an especially strong privacy threat on aggregate location data, since location data is often processed alongside sensitive attributes, such as socioeconomic status~\cite{trasberg2023spatial} and vaccination status~\cite{hope2021millions}. In a notable example, a high-ranking priest resigned after being outed as homosexual by a radical group that matched his smartphone data with location data from Grindr, a popular dating app among the LGBTQ+ community~\cite{boorstein2023colorado}. It is therefore important to understand the practical risk that MIAs pose on aggregate location data, particularly by a realistic adversary, who only possesses information about their target.

The first and most prominent MIA on aggregate location data was proposed by~\citet{pyrgelis2017knock}. Their ``Knock-Knock'' (KK) MIA works by training a binary classifier on a set of aggregates, wherein the adversary includes the target trace half of the time, and labels the aggregates accordingly. However, in addition to knowing the target trace, KK MIA requires the adversary to have access to a large auxiliary dataset of \textbf{individual-level traces} over the \textbf{same locations} and from a \textbf{similar population} as in the aggregate release. This is, when it comes to location data, a very strong assumption. This reliance on a strong adversary has led companies and practitioners to dismiss the risk posed by MIAs on location data. To the best of our knowledge, all previous works studying MIAs on aggregate location data require a similar auxiliary dataset~\cite{pyrgelis2020measuring,zhang2020locmia,oehmichen2019opal}. 

\textbf{Contributions. } 
To assess the realistic privacy risk of releasing aggregate location data, we introduce the Zero Auxiliary Knowledge (ZK) MIA. ZK MIA is the first MIA on aggregate location data that does not require the adversary to have access to an auxiliary dataset. To remove this strong assumption, we develop a novel synthetic data-based approach, in which the adversary generates a reference dataset of synthetic traces, using only statistical parameters estimated from the aggregate. Training aggregates are then created using the synthetic reference. To account for privacy mechanisms applied to the release, we develop techniques to correct the parameter estimation for bias and noise, which enables ZK MIA to effectively attack privacy-aware aggregates as well. We also demonstrate that a paired sampling technique further improves MIA performance by isolating the contribution of the target trace within the high-dimensional aggregate. In the setting of $\varepsilon$-DP aggregate location data, we show that paired sampling enables MIAs to approach the worst-case $\varepsilon$-DP bound, offering a significant increase in performance to previous implementations. 

We evaluate our Zero Auxiliary Knowledge (ZK) MIA against the state-of-the-art Knock-Knock (KK) MIA from~\citet{pyrgelis2017knock} using two location datasets: i) a large-scale call detail record (CDR) dataset, and ii) the Milan Twitter dataset~\cite{DVN/9IZALB_2015} from the Telecom Italia Big Data Challenge ~\cite{barlacchi2015multi}. We apply the MIAs on raw and privacy-aware aggregates computed over $1000$ users. Our results show that our ZK MIA closely matches the performance of KK MIA, without depending on extensive prior knowledge. On raw aggregates, both MIAs achieve $0.99$ AUC on both datasets, suggesting that aggregation in itself is an ineffective safeguard. Both MIAs also surpass $0.9$ AUC on both datasets under common privacy settings, including $\epsilon=1$ event-level DP noise addition.

We further relax assumptions and show that the adversary does not need the full target trace for ZK MIA to succeed. Indeed, ZK MIA still achieved $0.84$ AUC on the CDR dataset, with $\epsilon=1$ event-level DP in place, when the adversary only knew  a random 10\% of the target trace.

After extensive evaluations across different privacy mechanisms, namely the suppression of small counts ~\cite{chen2009privacy} and $\varepsilon$-DP noise addition ~\cite{dwork2006calibrating}, we argue that the commonly used $\epsilon$-DP implementations on aggregate location data~\cite{desfontaines2021list} do not protect against realistic privacy threats, such as our ZK MIA. We conclude that the only effective mitigation is the application of strong user level DP or user-day level DP guarantees, which is not yet a common practice ~\cite{telus_insights_2024, martinez2023netmob23, o2_transport_smart_steps_2019, safegraph_spend, precisely_placeiq_movement}.

\section{Definitions and Threat Model}
\label{sec:background}
We formally define location traces and aggregates in Section \ref{subsec:loc_trace_agg} and overview aggregate-level privacy measures in Section \ref{subsec:agg_privacy_measures}. In Section \ref{subsec:problem_formulation} and \ref{subsec:membership_classifier}, we outline the membership inference problem on aggregate location data and introduce the concept of a membership classifier. We present the threat model for our Zero Auxiliary Knowledge MIA and compare it against previous threat models for MIAs on location aggregates in Section \ref{subsec: threat_model}. Table \ref{tab:glossary_notations} of the Appendix contains a glossary of common terms.

\subsection{Location Traces and Aggregates}
\label{subsec:loc_trace_agg}

Let $\Si = \{s_1, ..., s_{|\Si|}\}$ represent the set of all regions of interest (ROIs) where location data is collected. Similarly, $\T = \{t_1, ..., t_{|\T|}\}$ denotes the set of time intervals, also known as epochs, during which data collection occurs. In this paper, we assume that the geographic positions (i.e. approximate longitude and latitude) of the ROIs are known. For example, $\Si$ may represent a set of square regions that partition a city into a grid, and $\T$ may represent contiguous hours over one month. 

We focus on the scenario where location data of a set of users $\Pop$ is collected over the ROIs $\Si$ and the epochs $\T$. We define the \textit{location trace} $L^u$ of a user $u \in \Pop$ as the set of geo-tagged and time-stamped visits $(s,t)$ that $u$ made within $\Si$ during $\T$. We formally represent a user's location trace as the binary matrix

\begin{equation}
    L^u_{s,t} =
\begin{cases}
    1 & \text{if user $u$ visited ROI $s$ during epoch $t$} \\
    0 & \text{otherwise}. 
\end{cases}
\label{def:location_trace}
\end{equation}

\begin{figure}[!h]
    \centering
    \includegraphics[width = \linewidth, page=3]{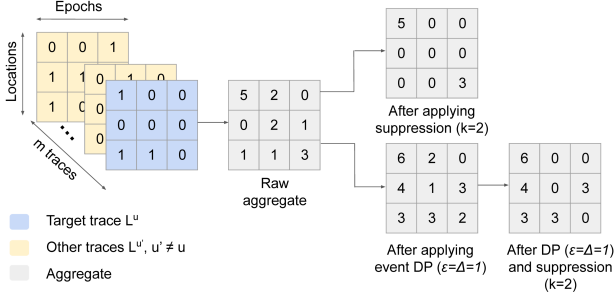}
    \caption{Example of how suppression of small counts and differential privacy may be applied to an aggregate with 3 ROIs (rows) and 3 epochs (columns).}
    \label{fig:privacy_measures_overview}
\end{figure}

Let $\Users \subset \Pop$ be a group of $m$ users whose location data is aggregated. We define an \textit{aggregate} $A^\Users$ to be the aggregate count statistics for $\Users$ over $\Si \times \T$. Formally, this is defined by the sum
\begin{equation}
    A^\Users = \sum\limits_{u \in \mathcal{U}}L^u.
\end{equation}
The entry $A^\Users_{s,t}$ therefore corresponds to the number of users in $\Users$ who visited ROI $s$ during epoch $t$.

\subsection{Privacy Measures on Location Aggregates}
\label{subsec:agg_privacy_measures}
The data collector may be wary of the privacy risks of releasing the raw aggregate $A^\Users$, and therefore apply privacy measures before releasing it.

\subsubsection{Differential Privacy}
\label{subsec:DP}
Differential privacy (DP)~\cite{dwork2006calibrating} is considered the gold standard for releasing information while protecting the privacy of individuals with formal guarantees. In essence, DP requires that the output of a computation over a dataset should not depend too much on the inclusion of any one record. 

\begin{definition}[$\varepsilon$-DP~\cite{dwork2006calibrating}]
    A randomized algorithm $M$ satisfies $\varepsilon$-DP if for all neighbouring datasets $D_1 \sim D_2$ (i.e.,
    differing in exactly one record), and all possible outputs $S \subset Range(M)$:
    \begin{equation}
        \Pr(M(D_1) \in S) \le e^{\varepsilon}\Pr(M(D_2) \in S) 
        \label{def:DP_equation}
    \end{equation}
\end{definition}

Thus, $\varepsilon$-DP limits the amount of information that can be inferred about individual records in the dataset, according to the privacy budget  $\varepsilon$~\cite{dwork2006calibrating}.  However, 
the privacy protection depends on what one considers as a ``record'', or \textit{privacy unit}, when defining the  neighbouring datasets. The most common definitions, in increasing level of privacy protection, are event-level, user-day level, and user-level DP (see ~\citet{desfontaines2021list} for an overview). The privacy unit for event-level DP is an individual data entry by any given user. For aggregate location data, this would be a single visit by a user to a ROI $s$ during an epoch $t$. The privacy unit for user-day level would be all visits registered by any given user over a day. Finally, for user-level, the unit would be all visits in any given user's trace. 

Randomised $\varepsilon$-DP mechanisms can be designed by adding noise sampled from the Laplace distribution~\cite{dwork2006calibrating}.
$A_{DP}^\Users(\varepsilon) = A^\Users + Lap(\frac{\Delta}{\varepsilon})$ would satisfy \eqref{def:DP_equation} and be an $\varepsilon$-DP aggregation mechanism, where $\Delta$ is the global sensitivity, determined by the privacy unit.

In this paper, we assume the common practice of post-processing to ensure legitimate aggregate counts \cite{ge2016updating, pyrgelis2017knock, pyrgelis2020measuring, zhu2022post}. Negative counts are set to $0$, counts exceeding the group size $m$ are set to $m$, and counts are rounded down to the nearest integer. These transformations will preserve $\varepsilon$-DP due to the post-processing theorem \cite{dwork2019differential}. We note that the adversary can always apply these transformations themselves if the data collector does not do so already. 

\subsubsection{Suppression of Small Counts (SSC)}
\label{subsec:bucket_suppression}

SSC is a privacy mechanism that aims to protect user privacy by hiding rare values. It has been frequently used across different types of datasets \cite{cretu2022querysnout, gadotti2019signal, pyrgelis2020measuring}, including mobility datasets \cite{kohli2023privacy,aktay2020google}. Instead of releasing the raw aggregate $A^\Users$, the data collector may choose a threshold $k \in \Ni \cup \{0\}$, and release the suppressed aggregate

\begin{equation}
  A_{SSC}^{\Users}(k)_{s,t} =
  \begin{cases}
                                    A^\Users_{s,t} & \text{ if } A^\Users_{s,t}>k \\
                                   0 & \text{ if }  A^\Users_{s,t} \le k.
  \end{cases}
  \label{def:suppressed_aggregates}
\end{equation}

$A_{SSC}^{\Users}(k)$ therefore contains the true count of users who visited a ROI $s$ during epoch $t$ as long as the count exceeds $k$. Lesser visited pairs $(s,t)$ that record $k$ or less visits are reported as $0$ instead. Suppression can also be applied following $\varepsilon$-DP noise addition, such that \eqref{def:suppressed_aggregates} is applied on a noisy aggregate ${A}_{DP}^{\Users}(\varepsilon)$. This produces ${A}_{DP, SSC}^{\Users}(\varepsilon,k)$, an $\varepsilon$-DP aggregate whose final counts have been suppressed with threshold $k$. This transformation would preserve $\varepsilon$-DP due to post-processing
\cite{dwork2019differential}, and may add a layer of complexity that mitigates attacks in practice. 

\subsection{Problem Formulation}
\label{subsec:problem_formulation}
 
We assume that the data collector releases aggregate count statistics $\overline{A}^{\Users}$ over the ROIs $\Si$ and the epochs $\T$, for the $m$  users in the group $\Users$ . There are various cases depending on the privacy measures applied prior to release: 
\[
\overline{A}^{\mathcal{U}} = 
\begin{cases} 
A^{\mathcal{U}}, & \text{if the raw aggregate counts are released,} \\
{A}_{DP}^{\mathcal{U}}(\varepsilon), & \text{if only $\varepsilon$-DP is applied,} \\
{A}_{SSC}^{\mathcal{U}}(k), & \text{if only threshold $k$ SSC is applied,} \\
{A}_{DP, SSC}^{\mathcal{U}}(\varepsilon, k), & \text{if threshold $k$ SSC is applied after $\varepsilon$-DP.}
\end{cases}
\]
The goal of an adversary $Adv$ performing an MIA  on $\overline{A}^{\Users}$ is to determine whether their target  $u^*$ contributed to $\overline{A}^{\Users}$, inferring \textit{IN} for $u^* \in \Users$ and \textit{OUT} for $u^* \not \in \Users$.

\subsection{Membership Classifier}
\label{subsec:membership_classifier}
Given an aggregate release $\overline{A}^{\Users}$ over $m$ users, an adversary infers membership of the target $u^*$ within the aggregation group $\Users$ by using a binary membership classifier. Classifiers are commonly instantiated as machine learning models~\cite{pyrgelis2017knock, pyrgelis2020measuring, zhang2020locmia}, but statistical models, like the log-likelihood function, have been applied as well~\cite{homer2008resolving, bauer2020towards}. To train the classifier, $Adv$ typically creates a balanced set of labeled size $m$ training aggregates~\cite{pyrgelis2017knock, zhang2020locmia, pyrgelis2020measuring, oehmichen2019opal}. Half of the aggregates include the target trace $L^{u^*}$ and are labeled $IN$, and the other half are labeled $OUT$. Training the classifier will create a decision boundary in the underlying space of aggregate releases~\cite{bishop2006pattern}. In the case of aggregate location data over ROIs $\Si$ and epochs $\T$, the decision boundary is characterized by a hypersurface that partitions the matrix space $\R^{|\Si| \times |\T|}$ into two sets.

\subsection{Threat Model}
\label{subsec: threat_model}
In this section, we present our Zero Auxiliary Knowledge (ZK) MIA threat model. It is commonly assumed in MIAs across various domains that the adversary has access to an auxiliary dataset and complete knowledge of the target record~\cite{pyrgelis2017knock,nasr2019comprehensive, salem2018ml, truex2019demystifying, yeom2018privacy, shokri2017membership}. Our ZK threat model relaxes both assumptions by eliminating the need for an auxiliary dataset and allowing for only partial knowledge of the target trace.

For context, we also describe threat models of previous MIAs on aggregate location data. All threat models consider an adversary $Adv$, whose goal is to determine whether a specific target user $u^*$ is included in the released aggregate $\overline{A}^{\Users}$. The aggregate is computed across $m$ users over ROIs $\Si$ and epochs $\T$. We assume that the locations of the ROIs $\Si$ are known.

\textbf{Knock-Knock~\cite{pyrgelis2017knock}: } The adversary  has an auxiliary dataset 
$Ref = \{L(u_1), ..., L(u_{|Ref|})\}$ of user traces, over the same locations and a similar population as the released aggregate. $Ref$ has at least $m$ traces, including the full target trace $L^{u^*}$.

\textbf{LocMIA~\cite{zhang2020locmia}: } The adversary knows $u^*$'s social network and has an auxiliary dataset $Ref = \{L(u_1), ..., L(u_{|Ref|})\}$ of user traces, over the same locations and a similar population as the released aggregate $\overline{A}^{\Users}$. $Ref$ has at least $m$ traces, including the traces of $u^*$'s friends, but not $L^{u^*}$.

\textbf{Zero Auxiliary Knowledge (ours): } The adversary knows a subset of the target $u^*$'s visits. Equivalently, the adversary knows a partial target trace $\tilde{L}^{u^*}$, such that $supp(\tilde{L}^{u^*}) \subset supp(L^{u^*})$.

KK MIA and LocMIA are reliant on the adversary's access to an extensive auxiliary dataset $Ref$. In particular, $Adv$ samples individual traces from $Ref$ to create training aggregates. These traces are also assumed to range over the same locations, and belong to a similar population as the traces aggregated in the release $\overline{A}^{\Users}$, in order to properly train the membership classifier (Section \ref{subsec:membership_classifier}). However, individual traces are known to be sensitive \cite{de2013unique}, and are unlikely to be made available, particularly when the data is aggregated as a privacy measure. Furthermore, because $Ref$ must contain at least $m$ traces, this assumption is impractical for even moderately sized aggregates. Although LocMIA removes prior knowledge about the target trace $L^{u^*}$, it must assume knowledge of $u^*$'s friends' traces to create a suitable proxy. More importantly, LocMIA still requires a large auxiliary dataset of individual traces.

In contrast, our Zero Auxiliary Knowledge adversary only requires that the adversary has knowledge about the target's location history. We emphasize that the adversary does not need to know the full trace $L^{u^*}$. Our threat model encompasses the case where only a few of the target's visits are known to the adversary. For example, the adversary may infer some of $u^*$'s visits from social media activity or direct observation.

\section{Related Work}
\label{sec:relatedwork}
\textbf{MIAs on Aggregate Location Data.} 
MIAs on aggregate location data have been shown to be successful on multiple location datasets \cite{pyrgelis2017knock,pyrgelis2020measuring,zhang2020locmia}, using a binary classifier to perform the inference task. The performance of the MIAs on small aggregates ($<500$) is especially well-studied, as the influence of the target is easier to distinguish~\cite{pyrgelis2017knock}. For example, the KK MIA by \citet{pyrgelis2017knock}, achieved  $AUC> 0.83$ when attacking size $100$ aggregates across two different mobility datasets. LocMIA \cite{zhang2020locmia} is another MIA on aggregate location data, which removes prior knowledge about the target trace. Instead, LocMIA assumes access to social network information, and the traces of the target's friends, in order to construct a proxy for the target's real trace. However, both KK MIA and LocMIA crucially require the adversary to have access to a large auxiliary dataset to train the binary classifier. In contrast, our ZK MIA does not require any auxiliary dataset, and only requires partial information about the target trace (e.g. A random $10\%$ proportion of their visits). ZK MIA therefore addresses the research gap of the MIA risk posed by a less knowledgeable attacker.  The distinctions in prior knowledge are discussed in depth in Section \ref{subsec: threat_model}. Our ZK MIA also features a novel approach, being the first MIA on aggregate location data to use synthetic trace generation.

\textbf{Generation of Synthetic Location Traces.} 
There are many techniques for generating synthetic location traces that capture realistic human mobility patterns ~\cite{kulkarni2017generating,kulkarni2018generative,ouyang2018non, karagiannis2007power, jahromi2016simulating, lee2009slaw}. However, since our ZK MIA requires generating suitable synthetic traces without using additional information, this heavily limits the scope of applicable techniques. RNNs, GANs, and copulas have been used to generate synthetic traces that collectively approximate a real mobility dataset~\cite{kulkarni2017generating,kulkarni2018generative,ouyang2018non}. However, these techniques require real traces to train the model, which the ZK adversary does not have. Many of the state-of-the-art mobility models are also unsuitable because they simulate small-scale continuous trajectories (e.g. walks on campus) \cite{karagiannis2007power, lee2009slaw}. In contrast, location aggregates typically comprise discrete traces over a metropolitan region. We therefore identified a probabilistic unicity model by \citet{farzanehfar2021risk}, which requires only four statistical parameters to guide the synthetic generation. We demonstrate that we can non-trivially adapt this model for the ZK MIA. In particular, we develop methods to precisely estimate these parameters from the aggregates in order to produce realistic synthetic location aggregates.

\textbf{MIAs with Reduced Auxiliary Data.} 
Previous attempts have been made~\cite{shokri2017membership,salem2018ml,truex2019demystifying,yeom2018privacy,crectu2021correlation,guepin2023synthetic} to relax the standard assumption of an adversary's access to an auxiliary dataset that has high statistical similarity with the attacked dataset, e.g. sampled from the same distribution~\cite{shokri2017membership,nasr2019comprehensive}. In the setting of machine learning (ML) models, where an MIA infers whether a record was a part of the ML model's training set, \citet{shokri2017membership} proposed an MIA without auxiliary data. Instead, they use synthetic data, which they generate using the ML model's confidence scores. Similarly, \citet{salem2018ml} trained an MIA using unrelated data, e.g., training on text data to attack an image model. These approaches require access to the ML model, and they train on features that are specific to ML models, such as the top-$k$ confidence scores, which may be shared across ML models pertaining to different types of data. 
\begin{figure}[h!]
    \centering
    \includegraphics[width = \linewidth, page=2]{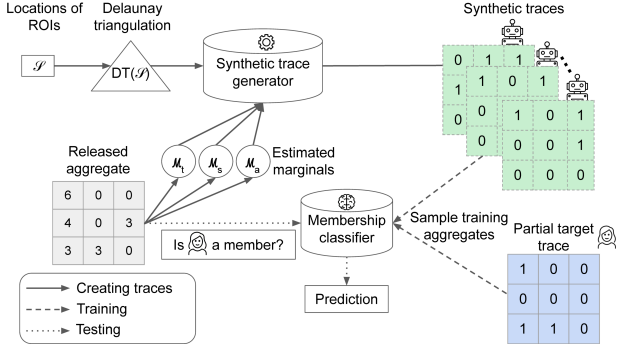}
    \caption{ZK MIA architecture: $Adv$ first creates synthetic traces, then uses them with the partial target trace to train the membership classifier, before predicting membership.}
\label{fig:ZK_architecture}
\end{figure}
In contrast, the features used to train an MIA on aggregate location data explicitly depend on the specific regions, times, and population over which the aggregates are computed. In the context of synthetic generators, \citet{guepin2023synthetic} performed an MIA against synthetic data using the released synthetic dataset as the auxiliary dataset. However, in the case of aggregate location data, the release cannot be directly used as a reference dataset, since it does not contain individual records.

\section{Methodology}
\label{sec:methodology}
We dedicate Sections \ref{subsec: ZK_MIA}-\ref{subsec:marginal_approximations} to explaining the synthetic-based methodology of our Zero Auxiliary Knowledge MIA. Section \ref{subsec:paired_sampling} explains the paired sampling mechanism on training aggregates, which boosts the performance of ZK MIA and KK MIA, as shown in Section \ref{exp: paired_sampling}.

\subsection{Zero Auxiliary Knowledge MIA Framework}
\label{subsec: ZK_MIA}
We implement the Zero Auxiliary Knowledge MIA as a binary classifier. However, whereas $Adv$ uses their auxiliary dataset as a reference for creating training aggregates for KK MIA and LocMIA, $Adv$ instead uses the reference of synthetic traces that they created from the released aggregate. Furthermore, if the full target trace $L^{u^*}$ is not known, $Adv$ may instead use a partial trace when creating the $IN$ training aggregates. Figure \ref{fig:ZK_architecture} illustrates ZK MIA's overall attack architecture.

\subsection{Generating Synthetic Traces from Aggregate Location Data}
\label{subsec:generating_synthetic_traces_from_agg}

In order to 
generate synthetic traces for our Zero Auxiliary Knowledge MIA, we adapt a probabilistic mobility model \cite{farzanehfar2021risk}. \citet{farzanehfar2021risk} developed this model to reproduce unicity patterns in large populations. The model requires four statistical parameters, described below and illustrated in Figure \ref{fig:four_inputs} of the Appendix. 
Recall that for a discrete random variable $X$ taking values in $x_1, ..., x_N$,  its probability mass function (p.m.f.) $\mathcal{P}: \{x_1, ..., x_N\} \to [0,1]$ maps each possible value to its corresponding probability.

\begin{enumerate}
    \item The marginal space distribution $\mathcal{P}_S: \Si \to [0,1]$  is a p.m.f. such that $\mathcal{P}_S(s)$ is proportional to the number of visits to ROI $s$ by users in $\Pop$ across all epochs in $\T$.
    \item The marginal time distribution $\mathcal{P}_T: \T \to [0,1]$ is a p.m.f. such that $\mathcal{P}_T(t)$ is proportional to the number of visits during epoch $t$ by users in $\Pop$ across all ROIs in $\Si$.
    \item The marginal activity distribution $\mathcal{P}_A$ models the total number of visits recorded within $\Si$ during $\T$ by a user drawn from $\Pop$. 
    \item The Delaunay triangulation, denoted $DT(\Si)$, is a triangulation with vertices corresponding to the set of positions (longitude and latitude) of ROIs in $\Si$. $DT(\Si)$ has the property that no vertex lies inside the circumcircle of any triangle in $DT(\Si)$~\cite{delaunay1934sphere}.
\end{enumerate}

We note that the Delaunay triangulation $DT(\Si)$ is determined by the locations of the ROIs. Since the locations of the ROIs are assumed to be known (Section \ref{subsec: threat_model}), $DT(\Si)$ can be immediately obtained from the release. We explain how the other statistical inputs, the three marginal distributions, can be approximated from the released aggregate in Section \ref{subsec:marginal_approximations}.

We now describe our procedure, adapted from ~\citet{farzanehfar2021risk}, for generating synthetic traces using the four inputs. For each synthetic trace $L^{syn}$, we first sample the number of visits $n_{visits}$ from the activity marginal $\mathcal{P}_A$. This determines the number of nonzero entries in the $\Si \times \T$ matrix $L^{syn}_{s,t}$. Second, we sample an origin ROI $s_0$ from the space marginal $\mathcal{P}_S$, and a connected sub-graph $C(s_0)$ from the Delaunay triangulation $DT(\Si)$, such that $s_0 \in C(s_0)$. $C(s_0)$ will correspond to the set of ROIs that may be visited in $L^{syn}$. This is done to emulate the natural tendency to move to and from the same proximate locations (e.g. home and work). Finally, we sample $n_{visits}$ spatiotemporal visits $(s,t)$ for which we set $L^{syn}_{s,t}=1$. For each visit, $s$ is sampled from the space marginal $\mathcal{P}_S$ restricted to $C(s_0)$, and $t$ is sampled from the time marginal $\mathcal{P}_T$. All the sampling steps are independent. 

We make two modifications of the original algorithm~\cite{farzanehfar2021risk}.
First, to avoid over-saturating unpopular regions, we sample the origin ROI $s_0$ according to the space marginal $\mathcal{P}_S$ rather than uniformly. Second, to allow users to visit multiple ROIs within the same epoch, we sample the epochs with replacement rather than without replacement. Our procedure is summarized in Algorithm~\ref{alg:synthetic_generation} of the appendix.

\subsection{Obtaining Accurate Marginals}
\label{subsec:marginal_approximations}
In order to generate suitable synthetic traces for ZK MIA, $Adv$ must estimate the marginal distributions $\mathcal{P}_S, \mathcal{P}_T, \mathcal{P}_A$, computed over the full population $\Omega$, using the aggregate release $\overline{A}^\Users$. In this section, we motivate and justify the techniques that we developed to obtain strong estimates $\widehat{\mathcal{P}}_S, \widehat{\mathcal{P}}_T, \widehat{\mathcal{P}}_A$. This task is especially challenging when privacy measures distort the aggregate data. We develop separate techniques to correct for bias in the case of SSC, and to correct for noise in the case of DP. The effects are shown in Figures \ref{fig:log_compress} and \ref{fig:power_transform} respectively. Algorithm \ref{alg:marginals} in the Appendix summarizes how we approximate all three marginals from $\overline{A}^\Users$.

\subsubsection{\textbf{Estimating Space and Time Marginals}}
\label{subsec: est_space_time_marginals}

 \begin{figure}[!h]
    \centering
    \includegraphics[width = 0.7\linewidth, page=1]{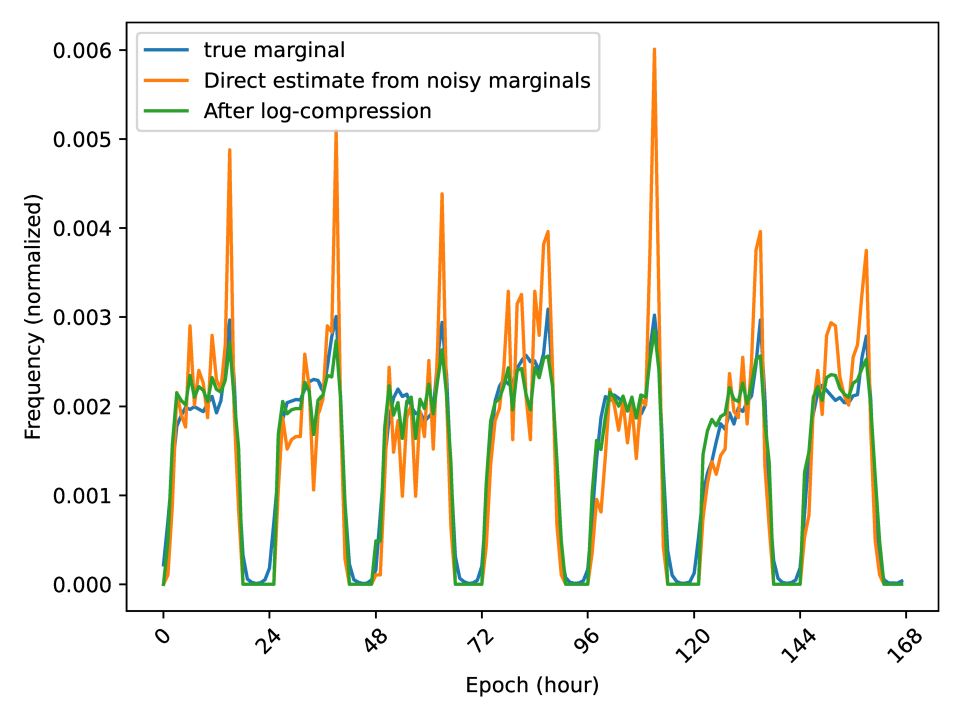}
    \caption{Log compression for SSC aggregates:
    SSC biases the estimate obtained from the aggregate by creating more extreme values. The true time marginal from the CDR dataset (plotted for the first week) is better approximated after the empirical estimate from the aggregate ($m=1000$, $k=1$) undergoes log compression $\log(1+\gamma x)$ with $\gamma$ chosen as in (8).}
    \label{fig:log_compress}
\end{figure}

Suppose that the data collector releases the aggregate $\overline{A}^\Users$, which may or may not have privacy measures. $Adv$ can directly compute the empirical space and time marginals, which we denote  $\widehat{\mathcal{P}}_S^0$ and $\widehat{\mathcal{P}}_T^0$, from the released aggregate matrix $\overline{A}^\Users$.
\begin{align}
    &\widehat{\mathcal{P}}^0_S(s) = \frac{1}{\lVert \overline{A}^\Users \rVert_1} \sum_{t=1}^{|\T|} \overline{A}^\Users_{s,t}\\
    &\widehat{\mathcal{P}}_T^0(t) = \frac{1}{\lVert \overline{A}^\Users \rVert_1} \sum_{s=1}^{|\Si|} \overline{A}^\Users_{s,t}
    \label{eq:empirical_marginals}
\end{align}

\textbf{Raw Aggregate: }
In the case where the released aggregate provide the raw counts, i.e. $\overline{A}^\Users = A^\Users$, the empirical marginals tend to be highly accurate. An example is shown in Figure \ref{fig:raw_time_marg} in the Appendix. The accuracy of these estimates is intuitive because we expect the mobility patterns of an aggregation group $\Users$ to resemble those of the population $\Pop$. Thus, we set $\widehat{\mathcal{P}}_{\underscore}= \widehat{\mathcal{P}}^0_{\underscore}$ if the released aggregate is unmodified. We use the subscript $\underscore$ to indicate generality for both the space $S$ and time $T$ marginals.

\textbf{Suppressed Aggregate: } However, if the data collector applies SSC with threshold $k$, i.e. $\overline{A}^\Users = A_{SSC}^\Users(k)$, then this will systematically bias the empirical marginal $\widehat{\mathcal{P}}_{\underscore}^0$, because popular ROIs and epochs are more likely to evade suppression. It is therefore easy to see that suppression will reduce the observed probabilities of less popular entries and boost the probabilities of more popular entries. 

To correct the bias, we flatten the empirical estimate $\widehat{\mathcal{P}}_{\underscore}^0$ by boosting low frequency counts and reducing high frequency counts. Upon the insight that $\widehat{\mathcal{P}}_{\underscore}^0$ can be likened to an audio signal, we adapt the logarithmic compression technique used to reduce dynamic range \cite{FMP_C3S1_LogCompression, miguel2004tempo}
\begin{align}
    x \to \log{(1+\gamma x)}, \text{  } x \ge 0, 
\end{align}
where the scaling factor $\gamma \ge 0$ regulates the compression level \cite{FMP_C3S1_LogCompression}. In music signal processing, $x 
\ge 0$ corresponds to the intensity of a given frequency. In our case, $x \ge 0$ corresponds to probabilities within the empirical marginal $\widehat{\mathcal{P}}_{\underscore}^0$. We choose 
\begin{align}
    \gamma(\widehat{\mathcal{P}}_{\underscore}^0) = \max\limits_{x \in \widehat{\mathcal{P}}_{\underscore}^0: x > 0 }\frac{1}{x}
\end{align}

 \begin{figure}[!h]
    \centering
    \includegraphics[width = 0.7\linewidth, page=1]{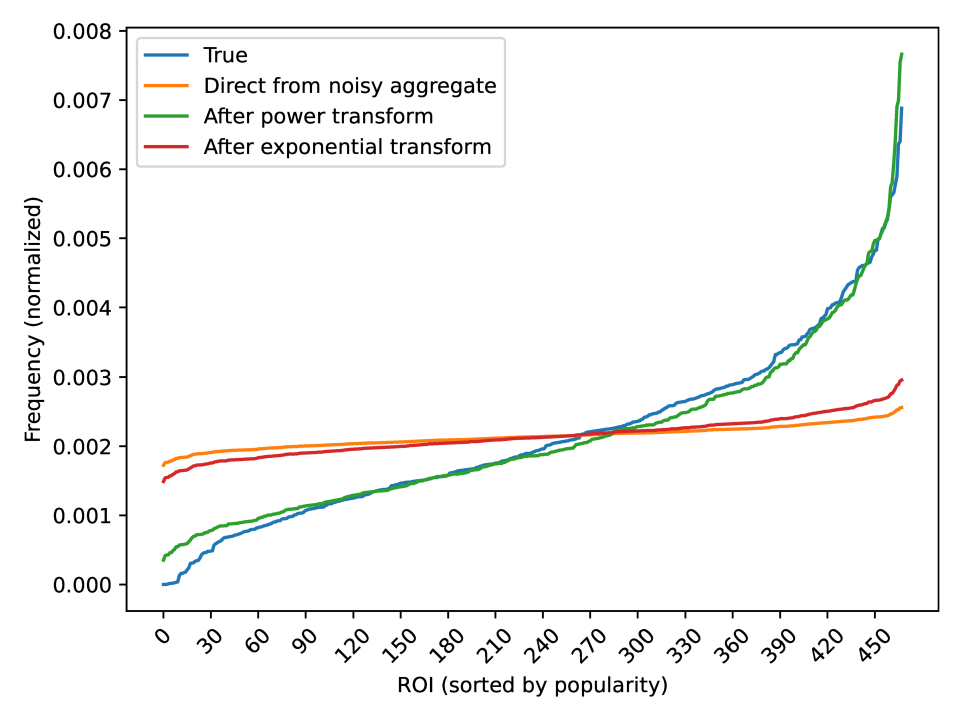}
    \caption{Power transformation for DP aggregates: DP noise compresses the estimate obtained from the aggregate. The true space marginal from the CDR dataset (organized by popularity) is better approximated after the empirical estimate from the aggregate ($m=1000$, $\frac{\Delta}{\varepsilon} = 1$) undergoes power transformation $x^p$ with $p$ selected according to Algorithm \ref{alg:p_selection}.}
    \label{fig:power_transform}
\end{figure}

 to automatically parameterize
 $\gamma$ based on the smallest observed non-zero probability. We therefore estimate $\widehat{\mathcal{P}}_{\underscore} = \log(1 + \gamma(\mathcal{P}_{\underscore}^0)$, where we omit the normalization constant. 
 
 We do not argue that our choice of method and parameter $\gamma$ is optimal. However, the debiasing substantially improves the estimate, as shown in Figure \ref{fig:log_compress}, and it is done without additional information.
 
\textbf{DP Aggregate.} If $\varepsilon$-DP noise is added to each entry in the aggregate release, i.e. $\overline{A}^\Users_{s,t} = A_{DP}^\Users(\varepsilon)_{s,t} = {A}^\Users_{s,t} + Lap(\frac{\Delta}{\varepsilon})$, then the noise will overpower the signal in the computation of the empirical marginals (Eq. \ref{eq:empirical_marginals}). This follows from the fact that location aggregates are high-dimensional sparse matrices \cite{pyrgelis2020measuring}. Therefore, conversely to the SSC case, the probabilities within the empirical marginals are compressed, since each probability is characterized mostly by thousands of independent noise samples. This effect is visualized for several different noise scales in Figure \ref{fig:Milan_space_laplace} of the Appendix. We also prove that under strong sparsity assumptions, $\widehat{\mathcal{P}}_{S}^0$ converges to the discrete uniform distribution on $\Si$ as the number of epochs in the observation period $|\T| \to \infty$, in Theorem \ref{theorem:marg_conv_unif} of the Appendix. 

To correct the low variance of the observed probabilities, we propose the power transformation $x^p$ with $p>1$, followed by renormalization. It is easy to see that this will increase the variance since the probabilities are in $[0,1]$. Automatically calibrating the power $p>1$ is a delicate matter. To do so, we start with $p=1$ and augment $p$ gradually until the transformed distribution achieves the target variance $\sigma^2$. Without any prior knowledge, we consider the case where each probability is randomly drawn. Equivalently, each probablity $x_i$ is sampled from $Unif(0,1)$, and then renormalized so that the total probability is $1$. Let $p_i$ denote the probabilities after normalization, and $\bar{p}$ be the mean of the normalized probabilities. For the space marginal, the variance is
\begin{align}
    \sigma^2= \frac{\sum_{i=1}^{|S|} (p_i - \bar{p})^2  }{|S|}.
\end{align}

 \begin{figure}[!h]
    \centering
    \includegraphics[width = 0.7\linewidth, page=1]{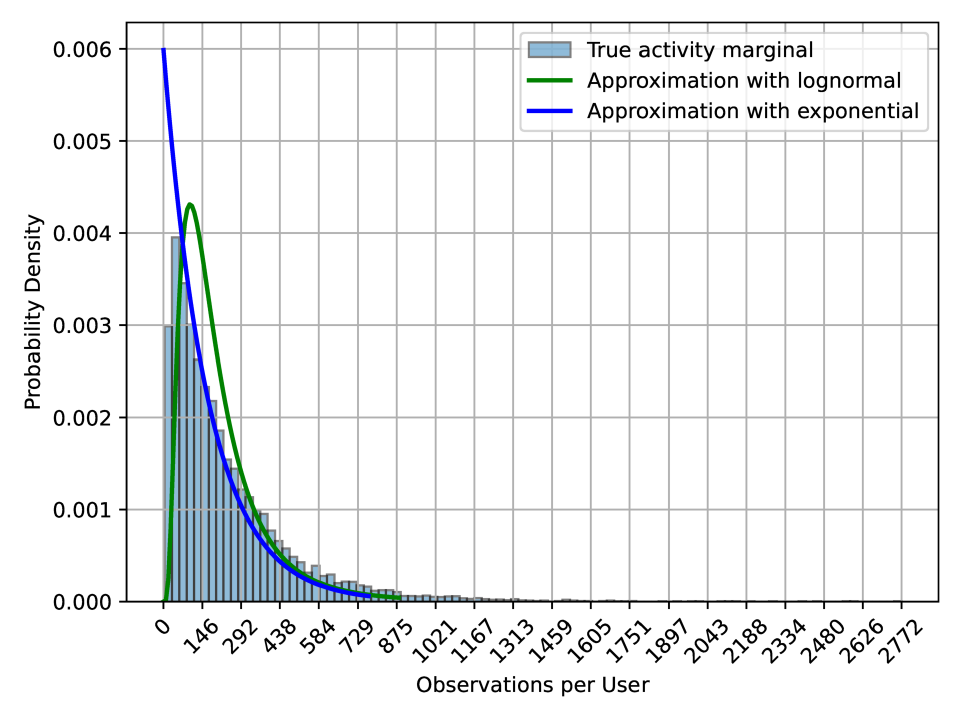}
    \caption{Exponential estimate for activity marginal: Once the mean number of visits is approximated, the activity marginal from the CDR dataset is best approximated by a lognormal distribution with an optimal skew parameter. However, it can also be approximated without additional parameters by an exponential distribution. }
    \label{fig:activity_lognormal_exp}
\end{figure}

For the variance from the time marginal, we replace $|S|$ with $|T|$ in the above equation. The algorithm for selecting $p$ is given in Algorithm \ref{alg:p_selection} of the Appendix. Figure \ref{fig:power_transform} shows that our automatically parameterized power transformation significantly improves the estimate. In contrast, the exponential transformation $(e^x-1)/\gamma$, which is the inverse of the log compression function $\log(1+\gamma x)$, fails to denoise because the inverse is inapplicable after considering normalization. 
 
\subsubsection{\textbf{Estimating the activity marginal}}

The released aggregate $\overline{A}^\Users$ does not leak granular information about the activity marginal $\mathcal{P}_A$. However, $Adv$ may obtain the empirical mean number of visits per user according to the released aggregate,
\begin{align}
    \widehat{\mu}_0 = \frac{1}{m} \sum_{s,t \in \Si \times \T} \overline{A}^\Users_{s,t}.
\end{align}
If the aggregate is raw, then we expect $\widehat{\mu}_0$ to be a strong estimate due to well-known regularity results about population-wide mobility activity \cite{schneider2013unravelling,seshadri2008mobile, farzanehfar2021risk}. In this case, $Adv$ sets $\hat{\mu}=\widehat{\mu}_0$.

However, if either SSC or $\varepsilon$-DP is applied, then the estimate $\widehat{\mu}_0$ would fail. Algorithm \ref{alg:estimate_mean} in the Appendix describes how $Adv$ can obtain a better estimate $\hat{\mu}$. Given an aggregate release $\overline{A}^\Users$ of size $m$, $Adv$ can use $\widehat{\mathcal{P}}_{S}$ and $\widehat{\mathcal{P}}_{T}$ to iteratively improve their estimate, starting with ${\mu}_0$. Given guess $\widehat{\mu}_n$, $Adv$ creates a synthetic aggregate by generating $m$ synthetic traces, parameterized by $\widehat{\mathcal{P}}_{S},\widehat{\mathcal{P}}_{T}$ and $\widehat{\mathcal{P}_A} \sim \widehat{\mu}_n$. $Adv$ may then apply the same privacy measures that were applied on $\overline{A}^\Users$. $\widehat{\mu}_{n+1}$ is obtained by increasing or decreasing $\widehat{\mu}_{n}$ relative to the difference in counts with $\overline{A}^\Users$.

Once $\hat{\mu}$ is obtained, $Adv$ can simply pick $\widehat{\mathcal{P}_A} \sim \widehat{\mu}$, such that each synthetic trace has $\hat{\mu}$ visits. However, it is well known that human mobility activity follows a heavy-tailed distribution, i.e., a heavier tail than the exponential distribution. The best approximations are lognormal, beta, or power-law distributions \cite{farzanehfar2021risk, schneider2013unravelling, seshadri2008mobile}.

It would be reasonable for $Adv$ to use a heavy-tailed distribution with mean $\hat{\mu}$, but these distributions require a second parameter, e.g. skewness, to determine the distribution shape. $Adv$ can use well-known parameters from other cities' datasets to complete the estimate \cite{schneider2013unravelling, seshadri2008mobile}, but to ensure that $Adv$ does not use additional knowledge, we assume that they use the sub-optimal estimate $\hat{\mathcal{P}}_A \sim Exp(\hat{\mu})$, as shown in Figure \ref{fig:activity_lognormal_exp}.

\subsection{Paired Sampling for Training}
\label{subsec:paired_sampling}

When MIAs target high-dimensional aggregate data, such as location data, the membership classifier must handle noise arising from thousands of entries, which are unrelated to the target record.  For example, many of the \textit{IN} training aggregates may coincidentally have high counts in entries that are absent in the target record. This would skew the decision boundary of the membership classifier, which may lead to false positives when testing. We may similarly obtain false negatives due to spurious patterns within the \textit{OUT} training aggregates. These challenges are compounded by the implementation of privacy measures. For example, $\varepsilon-$DP would add noise to each entry. Given the dimensionality and nature of the aggregate data, spurious patterns will likely skew the decision boundary, even when hundreds or thousands of training aggregates are sampled. Given a fixed number of training samples, we demonstrate that the way in which the training set is sampled strongly influences the performance of the MIA. In particular, the sampling technique can guide the convergence of the decision boundary in order to prevent misclassification due to noise.

To the best of our knowledge, all previous MIAs on aggregate location data sampled their training set via independent random sampling \cite{pyrgelis2017knock, oehmichen2019opal, zhang2020locmia, pyrgelis2020measuring}. Training aggregates are created by independently sampling groups of $m$ users from the population $\Omega$ and labeling them according to the target $u^*$'s presence.

On the one hand, independent sampling discourages overfitting to the training data by exposing the classifier to a wide variation of samples. On the other hand, independent sampling does nothing to prevent spurious patterns from distorting the decision boundary. 

We propose a paired sampling technique to guide the convergence of the decision boundary. The idea is to use sampling to help the classifier identify the differential impact of the target record at the aggregate level. Paired sampling independently samples groups of $m-1$ users from $\Omega \setminus \{u^*\}$. Then, an \textit{IN} sample is created by adding $u^*$ as the group's last member, and an \textit{OUT} sample is created by adding another randomly selected user. The training set is therefore characterized by a set of \textit{IN}/\textit{OUT} pairs, which differ in exactly one record (the target's). If noise is added to aggregates prior to release, then $Adv$ must inject the same noise sample $\varepsilon$ to each paired sample, $A^{\Users^{IN}}$ and $A^{\Users^{OUT}}$, to ensure that the target's differential impact is preserved between each \textit{IN}/\textit{OUT} pair.

\begin{figure*}[h!]
    \centering
    \begin{subfigure}[b]{0.49\linewidth}
        \centering
        \includegraphics[width=0.7\textwidth]{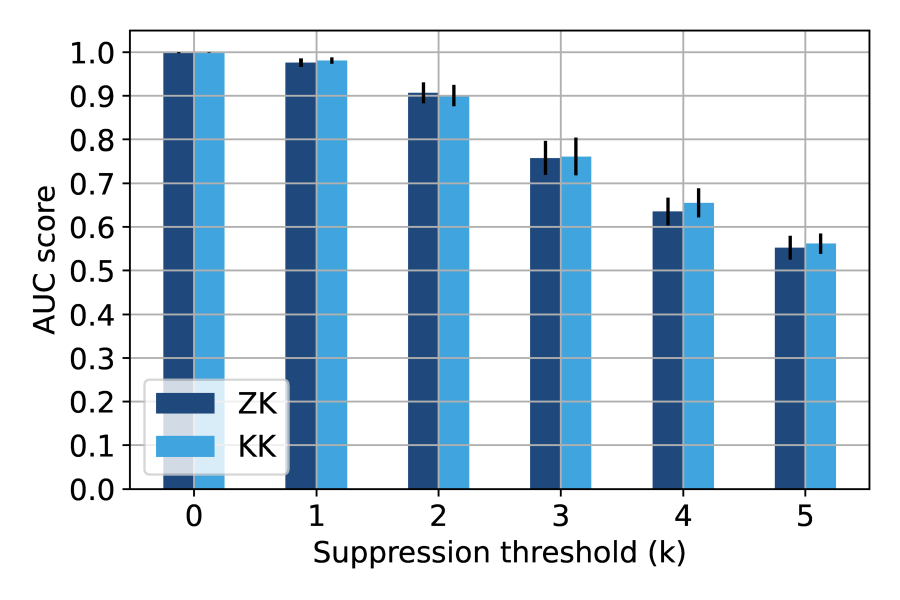}
        \caption{CDR dataset}
        \label{fig:bucket_KK_ZK_cdr}
    \end{subfigure}%
    \hfill 
    \begin{subfigure}[b]{0.49\linewidth}
        \centering
        \includegraphics[width=0.7\textwidth, page=1]{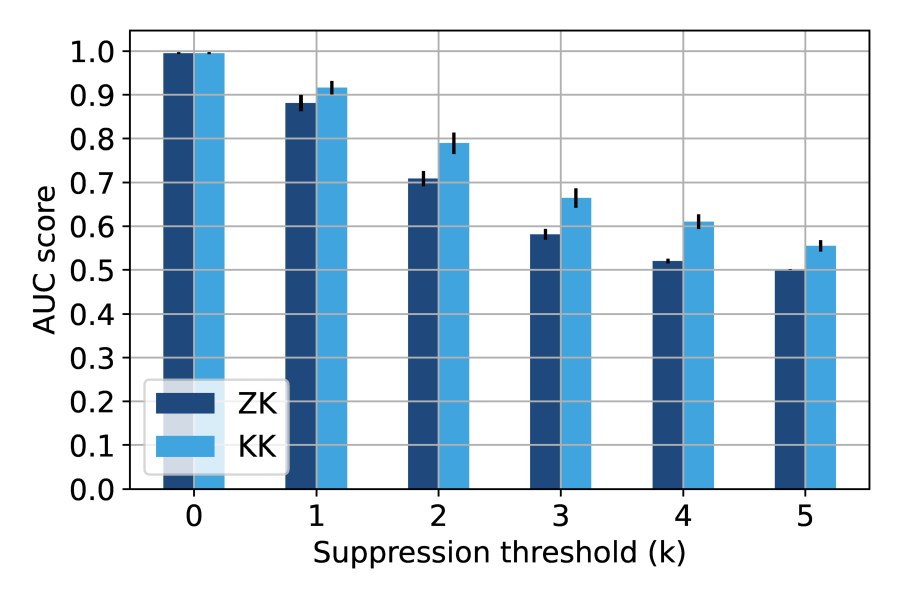}
        \caption{Milan}
        \label{fig:bucket_KK_ZK_milan}
    \end{subfigure}
    \caption{Mean AUC scores with standard error for ZK and KK on size $1000$ aggregates with various suppression thresholds}
    \label{fig:bucket_ZK_KK}
\end{figure*}

Paired sampling therefore actively encourages the membership decision boundary to be formed based on relevant criteria related to the target. It also discourages spurious decision boundaries because of the high degree of similarity between \textit{IN}/\textit{OUT} pairs. An extreme value in an aggregate entry from an \textit{IN} sample will be matched with a similar value from its paired \textit{OUT} sample, with high probability. However, we note that using paired sampling effectively halves the training variation compared to independent sampling. Our experiments in Section \ref{exp: paired_sampling} demonstrate that paired sampling outperforms independent sampling across all tested settings of $\varepsilon$-DP noise addition. Hence, guiding the decision boundary towards relevant membership criteria often takes precedence over maximizing training variation. For completeness, we note that while we developed, studied, and named paired sampling independently, we later found that a similar idea was used in \citet{bauer2020towards} but had not been compared to independent sampling nor used elsewhere in the literature so far to the best of our knowledge. 

\section{Experimental Setup}
\label{sec:experimentalsetup}

To evaluate the efficacy of our ZK MIA, we compare it against the state-of-the-art Knock Knock (KK) MIA~\cite{pyrgelis2017knock, pyrgelis2020measuring} using aggregated location data from two different datasets.

\subsection{Datasets.}
In this section, we describe the two location datasets used for evaluating the MIAs, and discuss ethical considerations of the data collection and usage.

\noindent
\textbf{CDR: }The first dataset, which we refer to as "CDR", is a private dataset, shared with us by Flowminder~\cite{flowminder} for the purpose of this research. The raw dataset comprises timestamped and geo-tagged call records of approximately $11,000$ mobile phone users within a Latin American metropolitan area. The observation period is June 2021, with epochs defined by the $720$ hourly timeslots. The ROIs are defined by the service regions of approximately $500$ cellular antenna towers within the metropolitan area, which spans $\sim 150\text{km}^2$. The users were selected such that they registered at least one visit per week, to omit users who changed SIM cards, and such that the majority of their visits are within the region, to ensure that they are residents. $50$ target users for the MIAs were randomly selected by Flowminder. A histogram of the number of visits over the target traces is plotted in Figure \ref{fig:trace_size_CDR} in the Appendix.

\vspace{1em}
\noindent
\textbf{Milan: }The second dataset is the Milan Social Pulse dataset~\cite{DVN/9IZALB_2015}, made publicly available as part of the Telecom Italia Big Data Challenge~\cite{barlacchi2015multi}. This dataset comprises timestamped and geo-tagged tweets from $4840$ mobile phone users within the Milano region. The ROIs are defined by a grid of $100$ points, each with an approximate area of 256 $\text{km}^2$. We consider the location data from the first week of data, yielding $168$ hourly epochs. We do not delete any users from the dataset prior to aggregation. We randomly select $50$ targets among users who tweeted at least $10$ times during the observation period. A histogram of the number of visits over the target traces is plotted in Figure \ref{fig:trace_size_Milan} in the Appendix.

\vspace{1em}
\noindent
\textbf{Ethical Considerations:} 
Because of the sensitivity of location data, we did not access raw individual-level data, and instead collaborated with Flowminder (FM) to develop a privacy-preserving data-sharing pipeline for the purpose of this research~\cite{de2018privacy}. More specifically, data sharing was restricted to pre-computed aggregate matrices (labeled according to target membership, computed by FM on the data provider server) and $50$ target traces randomly chosen by FM. To further mitigate the privacy risk, the $\sim 500$ ROI and $720$ epoch indices were randomly permuted in the shared aggregate and target trace matrices, according to a mapping known only by FM. This random permutation relabeled the space and time indices, enabling us to test the MIAs without knowing the true times or locations. The graphs of the marginal statistics (see Figures \ref{fig:log_compress}, \ref{fig:power_transform}, \ref{fig:activity_lognormal_exp}) were plotted by FM and shared with us. All the data shared by FM with us is subject to a research contract between FM and our institution and was kept on our segregated server. The Milan dataset, derived from geo-tagged tweets, remains publicly available, and was only used for the purpose of testing the MIAs.

\subsection{MIA Implementation}

We perform a fair comparison between KK MIA and ZK MIA by training the binary membership classifier using the same parameters and architecture. This also helps us isolate the effect of removing auxiliary data on performance. We use a Logistic Regression binary classifier with default hyperparameters and $\mathbb{L}_1$ regularization, implemented with $sklearn$. The number of training groups $n_{train}=400$ matches previous implementations~\cite{pyrgelis2017knock, pyrgelis2020measuring}, and the groups are selected using paired sampling, unless specified otherwise. We additionally fine tune the decision boundary using $n_{val} = 100$ balanced independently sampled validation groups. We flatten the aggregates and feed them directly into the classifier as a vector, without any processing, such as PCA or feature extraction. Finally, as done in~\citet{pyrgelis2017knock}, $Adv$ applies the same privacy measures to training and validation aggregates, if the released aggregate is privacy-aware.

\textbf{Knock-Knock.} \indent
To implement the Knock-Knock MIA, we provide $Adv$ with a reference group $Ref$ of $5000$ real user traces (including the target trace $L^{u^*}$) when attacking the larger CDR dataset. We set $|Ref| = 2500$ for the Milan dataset. We note that this significantly surpasses previous reference sizes ($|Ref| = 1100$) implemented by~\citet{pyrgelis2017knock}.

\begin{figure*}[h!]
    \centering
    \begin{subfigure}[b]{0.49\linewidth}
        \centering
        \includegraphics[width=0.7\linewidth]{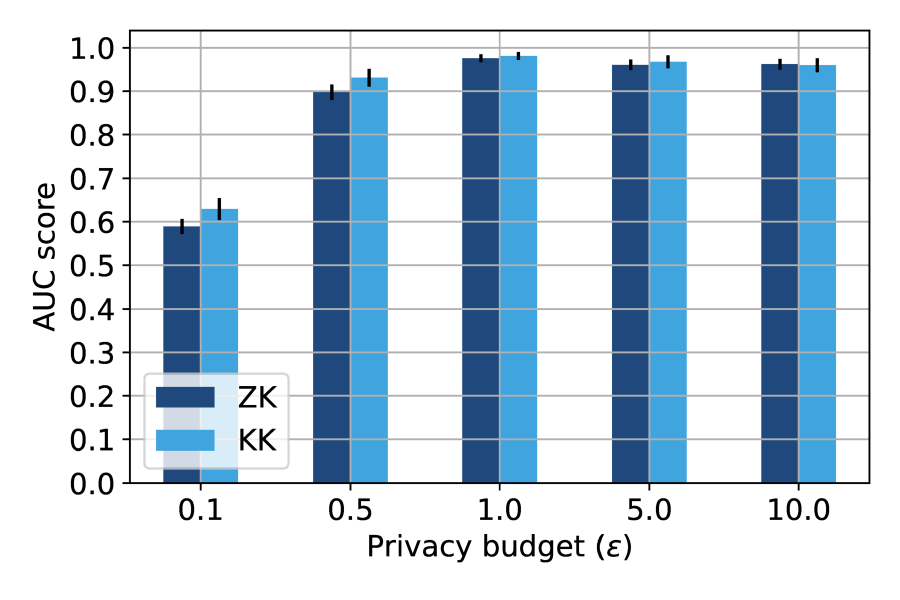}
        \caption{Event level DP: CDR dataset}
        \label{fig:event_level_DP_CDR}
    \end{subfigure}%
    \begin{subfigure}[b]{0.49\linewidth}
        \centering
        \includegraphics[width=0.7\linewidth]{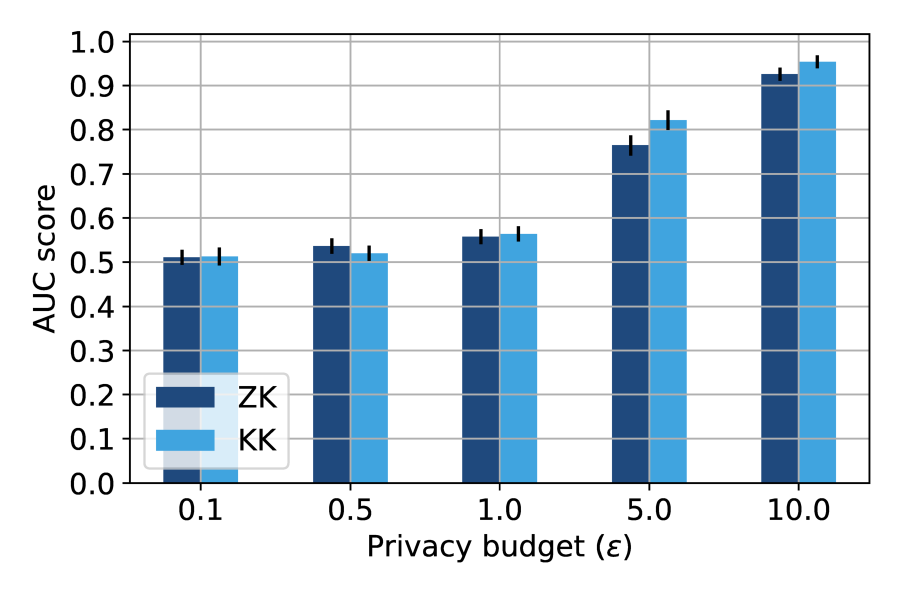}
        \caption{User-day level DP: CDR dataset}
        \label{fig:user_day_DP_CDR}
    \end{subfigure}
    \begin{subfigure}[b]{0.49\linewidth}
        \centering
        \includegraphics[width=0.7\linewidth]{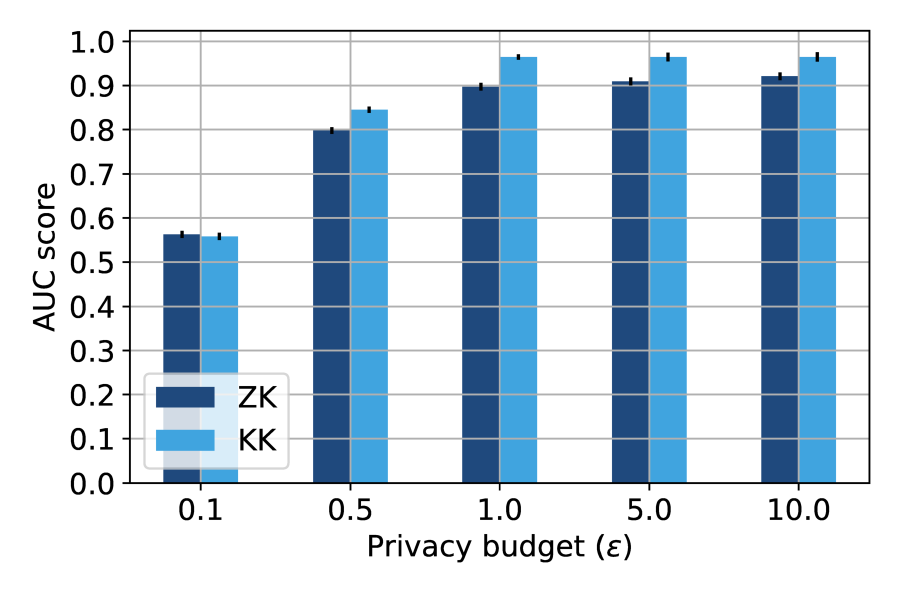}
        \caption{Event level DP: Milan}
        \label{fig:event_level_DP_Milan}
    \end{subfigure}%
    \begin{subfigure}[b]{0.49\linewidth}
        \centering
        \includegraphics[width=0.7\linewidth]{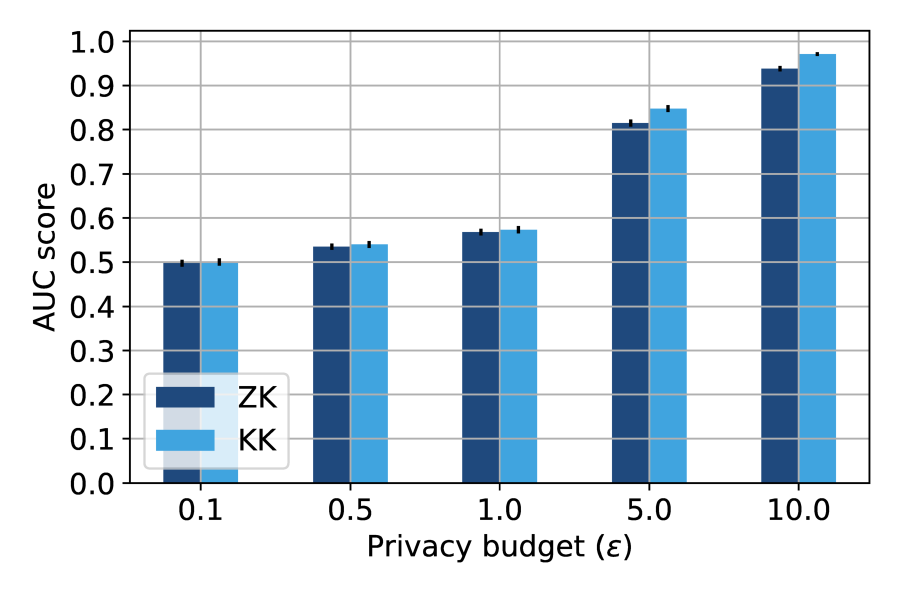}
        \caption{User-day level DP: Milan}
        \label{fig:user_day_DP_Milan}
    \end{subfigure}
    \caption{Mean AUC scores with standard error for ZK and KK on size $1000$ aggregates with various privacy units and budgets \(\varepsilon\).}
    \label{fig:dp_comparison}
\end{figure*}

\textbf{Zero Auxiliary Knowledge.}\indent
Our Zero Auxiliary Knowledge MIA is structurally identical to KK (PS), but ZK has a synthetic reference $Ref$, rather than a set of real traces. This reference is created according to the methodology detailed in Section \ref{sec:methodology}. Setting $|Ref|=5000$ allows for a direct comparison of the functionality of synthetic traces to real ones for the purpose of the MIA. However, we remark that capping the number of synthetic traces at $5000$ is an artificial restriction, since $Adv$ may generate arbitrarily more synthetic traces and achieve better performance, as shown in Figure \ref{fig:n_syn} of the Appendix. By default, we assume full access to the target trace $L^{u^*}$. This assumption is relaxed for experiment \ref{exp:partial_trace}.

\subsection{Evaluation}

\textbf{Default Experimental Parameters. } We randomly select $n_{targets} = 50$ targets from the dataset for evaluation. These targets are re-used for each experiment. Furthermore, in each experiment, $n_{test} = 100$ \textit{independently sampled} balanced test aggregates are created for each target, and shared across both MIAs to ensure that the test sets are identical. As done in ~\cite{pyrgelis2017knock}, the test aggregate user groups are sampled from a disjoint set of users to the Knock-Knock adversarial reference $Ref$, plus the target $u^*$. This corresponds to roughly $11, 000 - 5000 = 6000$ user traces for CDR and $5000-2500 = 2500$ user traces for Milan.

We perform all experiments on size $m=1000$ aggregates, which matches the largest aggregate size tested in ~\citet{pyrgelis2017knock} and exceeds the largest aggregate size tested in ~\citet{zhang2020locmia} (800). We do not vary $m$ since the relationship between aggregate group size and MIA effectiveness has already been documented extensively ~\cite{pyrgelis2017knock, pyrgelis2020measuring, zhang2020locmia}. We perform the Knock-Knock and Zero Auxiliary Knowledge MIAs 
in this setting under different privacy measures. We also perform experiments such that $Adv$ only knows a fraction $p_{u^*}$ of the target trace $L^{u^*}$, but by default, we assume that they know the full trace, i.e. $p_{u^*} = 1$. 

\textbf{Evaluation Metrics. \indent} In the past, MIAs on aggregate location data have been primarily evaluated using the area under the ROC curve (AUC score) as a metric ~\cite{pyrgelis2017knock, zhang2020locmia}. For this reason, and its suitability for assessing the strength of a binary classifier, we use the mean AUC over all targets as our primary metric. However, we also include the mean attack accuracy over all targets as a secondary metric, listing these scores in Section \ref{sec: accuracy} of the Appendix.

\section{Experimental Results}
\label{sec:results}
\subsection{Against Suppression of Small Counts}
\label{exp:ssc}

We first compare the performances of KK and ZK on aggregates whose counts have been suppressed according to threshold $k$. We apply SSC with thresholds $k \in \{0,1,2,3,4,5\}$ on test aggregates of size $m=1000$. We note that the case $k=0$ corresponds to releasing a raw aggregate.  We remark that there is a trivial rule that sufficiently determines non-membership in this special case.

\textbf{Rule ($k=0$):} If $u^*$ visits ROI $s$ during epoch $t$ and no users in the aggregation group $\Users$ visit $(s,t)$, then $u^*$ cannot be in $\Users$, i.e.,
$$
\exists s \in \Si, t \in \T: A^\Users_{s,t}=0\wedge L^{u^*}_{s,t} = 1 \implies u^* \not \in \Users
$$

\begin{figure*}[h!]
    \centering
    \begin{subfigure}[b]{0.49\linewidth}
        \centering
        \includegraphics[width =0.7\linewidth, page=1]{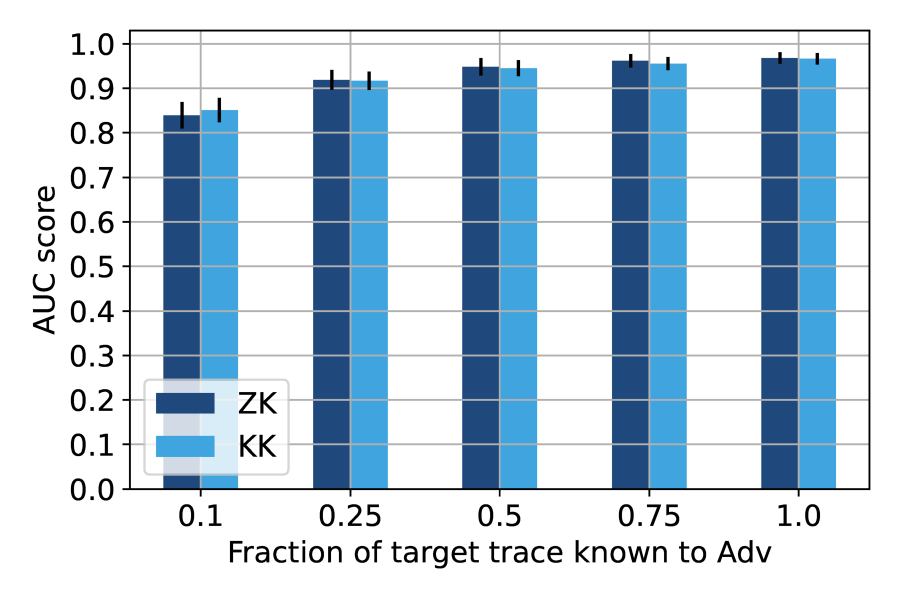}
        \caption{CDR dataset} 
        \label{fig:frac_targ_trace_CDR}
    \end{subfigure}%
    \hfill 
    \begin{subfigure}[b]{0.49\linewidth}
        \centering
        \includegraphics[width=0.7\textwidth, page=1]{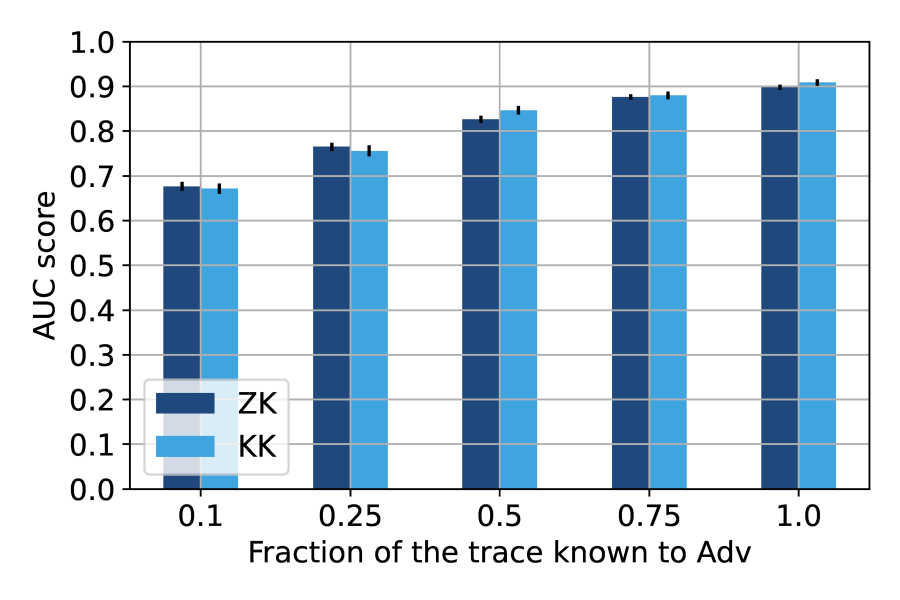}
        \caption{Milan }
        \label{fig:frac_targ_trace_Milan}
    \end{subfigure}
    \caption{Mean AUC scores with standard error for ZK and KK on size $1000$ aggregates with $\varepsilon=1$ event DP protection and $k=1$ suppression for varying fractions of the known target trace.}
    \label{fig:KK_ZK_frac_target_trace}
\end{figure*}

We therefore incorporate this rule when $k=0$, such that both MIAs first check if the released aggregate elicits the contradiction. If so, we immediately predict \textit{OUT}. Otherwise, we train, validate, and test the classifier as usual. The rule is invalid for $k>0$, since it would predict \textit{OUT} whenever $u^*$ has a visit to a suppressed entry.

\textbf{Results. \indent}
Figure \ref{fig:bucket_ZK_KK} shows that membership inference is a trivial task when applied to raw aggregates ($k=0$).
Both ZK and KK achieve near perfect AUC ($\ge 0.99$) on both datasets. This implies that aggregation is not an effective privacy mechanism in itself to protect high-dimensional location data from MIAs by weak or strong adversaries.

The results for $k>0$ reveal two general patterns. First, ZK compares closely with KK across different levels of SSC. On the CDR dataset, ZK's AUC stays within $0.02$ of KK for each $k$. We observe slightly worse results on the Milan dataset, but ZK still stays within $0.9$ AUC of KK for each $k$. Second, there is a monotonic decrease in performance when the threshold $k$ is increased. For $k=5$, the AUC is always less than $0.55$. This follows from the fact that suppression reduces the amount of available information, and $99\%$ of all nonzero entries are suppressed by $k=5$, as shown in Figure \ref{fig:bucket_info_loss} of the Appendix. Therefore, although SSC eventually mitigates the MIAs, it may come at the cost of destroying virtually all utility. 

Both MIAs perform worse on the Milan dataset compared to the CDR dataset. This is expected because we observe ~$6$ times less data per target in the Milan dataset, as shown in Figure \ref{fig:target_histogram}. Indeed, people generally tweet less often than they text and call. ZK MIA is more affected by dataset sparsity, given its dependence on marginal distribution estimates, which become less reliable in sparser datasets. 

\subsection{Against $\varepsilon$-DP Noise Addition}
\label{exp:DP}

Informed by practical applications of DP ~\cite{desfontaines2021list}, we consider event level and user-day level to be the privacy units, and we vary the privacy budget $\varepsilon \in \{0.1, 0.5, 1.0, 5.0, 10.0\}$ for each unit.

\textbf{Event Level DP.} \indent
An event is equivalent  to a visit by a user to $(s,t) \in \Si \times \T$. 
To offer privacy protection over an event, we set the global sensitivity $\Delta=1$. $\epsilon$-DP is then ensured by adding $Lap(\frac{1}{\epsilon})$ noise to each count in the aggregate matrix.

\textbf{User-day Level DP.} \indent
In order to protect each user's daily contributions without adding excessive noise, it is common to restrict user contributions prior to aggregation to achieve a smaller global sensitivity $\Delta$ ~\cite{aktay2020google, herdaugdelen2021protecting}. We analysed daily activity distributions, and had them preprocessed such that a user may only contribute up to $\Delta = 20$ visits in any given day for CDR, and $\Delta = 10$ visits in any given day for Milan.
$\epsilon$-DP at the user-day level is then ensured by adding $Lap(\frac{\Delta}{\epsilon})$ noise to each count in the aggregate matrix.

\textbf{Results. \indent}Figure \ref{fig:dp_comparison} shows that ZK MIA matches KK MIA across all tested DP settings. Indeed, ZK maintained a mean AUC within $0.06$ of KK (PS) across each of the $10$ privacy settings for both datasets. KK and ZK notably succeeded for many of the tested privacy budgets $\varepsilon$, particularly in the event level setting. Indeed, we observed $AUC \ge 0.9$ for both MIAs whenever the noise scale $\frac{\Delta}{\epsilon} \le 2$ for the CDR dataset, and $\frac{\Delta}{\epsilon} \le 1$ for the Milan dataset. These settings are in line with many real-life applications~\cite{kohli2023privacy, desfontaines2021list}. Conversely, user-day level DP with privacy budget $\varepsilon \le 1.0$ effectively reduced both MIAs to an AUC below $0.55$. We discuss the significance of these results with respect to practical mitigations in Section \ref{sec:dicussion}. 

\subsection{Partial Knowledge of the Target Trace}
\label{exp:partial_trace}

We now relax the assumption that $Adv$ knows the full target trace $L^{u^*}$. This is in line with our ZK threat model, and we expand KK MIA for this setting to be able to compare methods. To simulate a weaker adversary, we suppose that $Adv$ only knows a subset of the target $u^*$'s visits. We assume that $Adv$ only knows a random fraction $p_{u^*} \in \{0.1, 0.25, 0.5, 0.75, 1.0\}$ of the trace $L^{u^*}$. The number of retained visits is rounded up to the next integer to prevent cases where $Adv$ knows $0$ visits. For example, if a target has $4$ visits and $p_{u^*}=0.1$, then this would correspond to $Adv$ knowing $1$ visit. This partial trace is used instead of the full trace when creating \textit{IN} training and validation aggregates. The full trace $L^{u^*}$ is still used for \textit{IN} test aggregates.

We perform this experiment in the setting where the data collector applies event-level DP with $\epsilon=1$, followed by $k=1$ suppression. We choose this setting for a couple of reasons. First, to study the degradation of the MIAs with decreasing information about the target, we choose a setting where $Adv$ would succeed given the full target trace. Previous experiments revealed that $\epsilon=1$-DP at the event level and $k=1$ suppression were not effective in preventing the MIAs by themselves, as the MIAs achieved AUC $>0.97$ on the CDR dataset and AUC $>0.9$ for Milan. Second, we combine the two defense mechanisms to see if suppression has an observable mitigation effect when applied following DP noise addition. By the post-processing property of DP, this would not alter the theoretical performance bound. However, zeroing all counts $\le 1$ might add a layer of complexity that affect MIAs in practice.

\textbf{Results. \indent}
First, we note that applying $k=1$ SSC on top of $\varepsilon=1.0$ event level DP has an insignificant effect on the MIAs. For the full target trace, we continue to observe AUC $>0.97$ on the CDR dataset and AUC $>0.9$ on the Milan dataset. The only MIA with a noticeable decline was KK MIA on the Milan dataset, which dropped from $0.97$ AUC to $0.9$ AUC.

Although decreasing the fraction of the target trace known to the adversary from $1$ to $0.1$ decreases the performance of the MIAs, the corresponding degradation is relatively gradual. All AUCs are captured within a range of $0.13$ on the CDR dataset, and within a range of $0.22$ on the Milan dataset. Even the lowest observed AUC by ZK MIA on the CDR dataset ($0.84$ when $10\%$ of the target trace is known) achieves high discrimination. We note that the $50$ targets in both datasets have a wide variation in trace size, as shown in Figure \ref{fig:target_histogram} of the Appendix. For some targets, $Adv$ will only know one of the target's visits, whereas for others, they will still know dozens and be able to infer membership easily. Interestingly, we note that knowing a single visit from a target trace can still train a classifier that is better than random. For one CDR target with $9$ visits in their full trace, we observed ZK achieve an AUC of $0.660$ across $100$ aggregates when only $1$ random visit was known. Although far from perfect, we found this surprising, as it shows that even a single visit by the target can inform an MIA against a noisy aggregate over $1000$ users. 

\begin{figure*}[h!]
    \centering
    \begin{subfigure}[b]{0.49\linewidth}
        \centering
        \includegraphics[width = 0.7\linewidth, page=1]{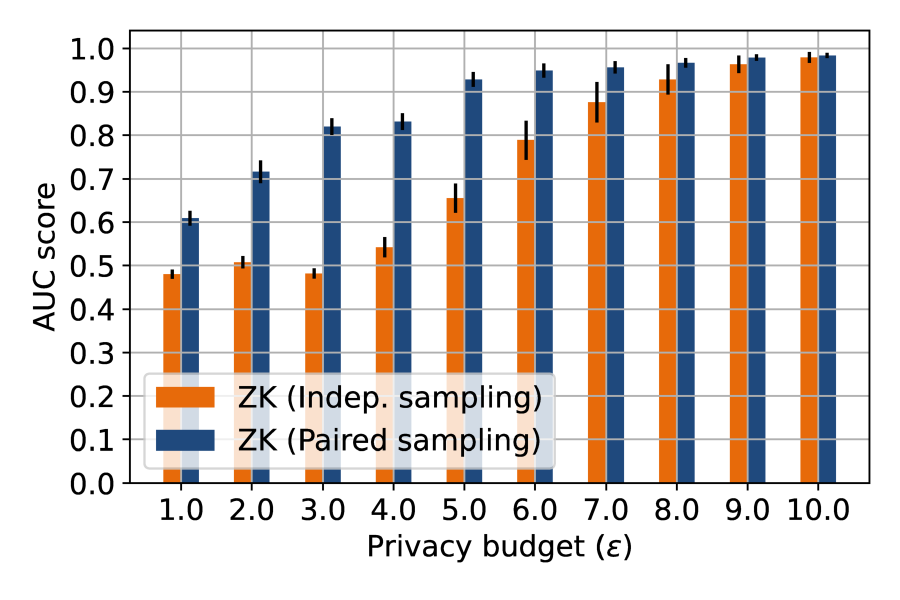}
        \caption{Zero Auxiliary Knowledge}
        \label{fig:ZK_paired_ind_DP}
    \end{subfigure}%
    \hfill 
    \begin{subfigure}[b]{0.49\linewidth}
        \centering
        \includegraphics[width =0.7\linewidth]{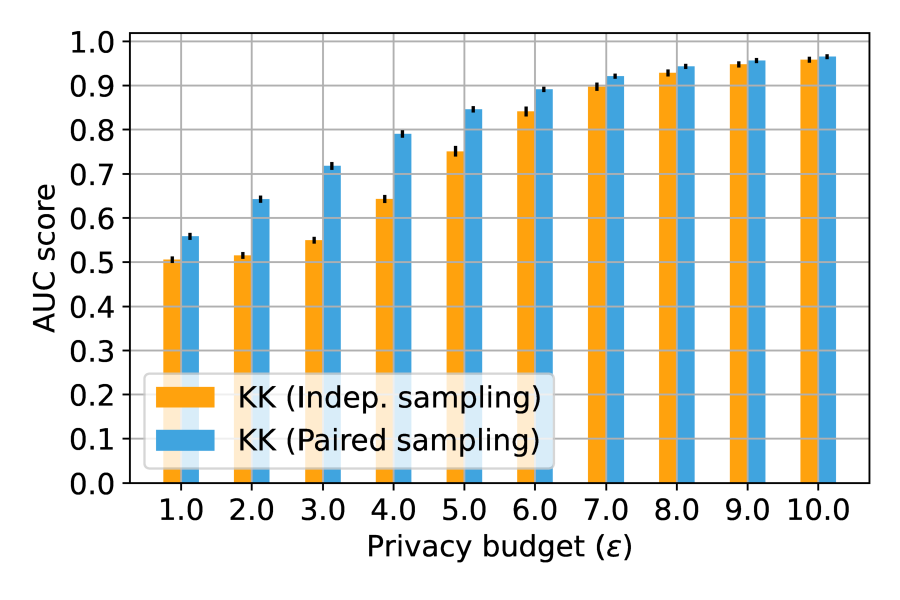}
        \caption{Knock-Knock}
    \label{fig:KK_paired_ind_DP}
    \end{subfigure}
    \caption{Mean AUC scores with standard error for ZK and KK on size $1000$ aggregates from the Milan dataset across different privacy budgets $\varepsilon$ under user-day level $\varepsilon$-DP, for MIAs using independent sampling vs. paired sampling.}
    \label{fig:KK_ZK_ps_is}
\end{figure*}

\subsection{Paired Sampling vs. Independent Sampling}
\label{exp: paired_sampling}

In this section, we study the performance of KK MIA and ZK MIA when we vary the sampling technique used for creating their training aggregates, i.e., paired sampling (PS) or independent sampling (IS). To test this, we consider the implementation of user-day $\varepsilon-DP$ on the Milan dataset across the privacy budgets $\varepsilon=1,2,3,..., 10$ for all four possible MIAs: KK (PS), KK (IS), ZK (PS), ZK (IS).

\textbf{Results. \indent}
From Figure \ref{fig:KK_ZK_ps_is}, we see that the paired sampling MIA always outperforms its independent sampling equivalent across all privacy budgets $\varepsilon=1, 2,3, ..., 10$. Paired sampling provides the largest boost when the inference task is challenging but not intractable. In particular, we notice a few striking examples of ZK (PS) drastically outperforming ZK (IS) in the middle of the graph. For $\epsilon=4$, ZK (IS) is basically random ($AUC=0.54$), yet simply switching to paired sampling enables the classifier to achieve an AUC of $0.83$. 

The improvement achieved by switching from independent sampling to paired sampling is less significant for KK in this experiment. This suggests that using training aggregates sampled from a reference dataset of real traces may introduce less randomness to the membership classifier's decision boundary, compared to when we use a synthetic reference. This is intuitive because of ZK's probabilistic generation method, which relies on sampling from three different estimated distributions. However, the MIAs have indistinguishable performance when both attacks use paired sampling, with the difference in AUC always staying within $0.02$ of one another across the $10$ privacy settings. This suggests that paired sampling effectively eliminates the noise contributed by coincidental patterns in random entries, and enables the membership classifier to form a suitable decision boundary.

\section{Discussion}
\label{sec:dicussion}
We first provide a critical analysis of the experimental results, followed by a discussion of mitigation strategies and their practicality. We then consider limitations in our methods and evaluations, before discussing how our methodology may be generalized to MIAs beyond the setting of aggregate location data.

\subsection{Analysis of Results: Practical Risk of MIA}

In Section \ref{sec:results}, we observed that our ZK MIA achieves approximately the same performance as the KK MIA across all experiments, and that both MIAs performed effectively across a range of common privacy settings. This has several important implications.

First, the ZK MIA significantly increases the attack surface of aggregate location data, since no auxiliary dataset is needed for the MIA. Although previous MIAs on aggregate location data have been successful, the strong assumption of a large auxiliary dataset prevents these attackers from attempting the MIA in most real-life cases. The auxiliary dataset comprises sensitive user-level information collected from the same dataset that is being aggregated. However, aggregation is applied to prevent the release of personal information. Moreover, $Adv$ would be restricted by the size of their auxiliary dataset, since they would not be able to perform MIAs on aggregates computed over more users than there are in their reference. In contrast, we demonstrated that our Zero Auxiliary Knowledge $Adv$ can create an arbitrary number of synthetic traces upon seeing the released aggregate, without an auxiliary dataset. This offers $Adv$ the flexibility to attack aggregates of any size. Section~\ref{appendix:varyingm} in the Appendix shows results for KK and ZK MIA across aggregates of size $m=100, 500, 1000, 2000, 3000$. Figure \ref{fig:n_syn} of the Appendix also shows that the ZK $Adv$ can boost their own performance up to diminishing marginal returns, simply by generating more traces.

We have also shown that MIAs on aggregate location data are more powerful than previously known. By incorporating paired sampling for training, we have demonstrated more effective MIA results on size $1000$ aggregates than previously reported \cite{pyrgelis2020measuring}, particularly when protected by differential privacy. Our results therefore demonstrate that MIAs on aggregate location data are easily performed without auxiliary data, more effective than previously believed, and that common privacy measures fail to protect against the risk.

\subsection{Proposed Mitigations}
Our results show that aggregated location data requires more stringent privacy safeguards to protect against MIAs. This is an inherently challenging task because our ZK MIA was able to succeed by using the aggregate to estimate where the population moves (space marginal), when the population moves (time marginal), and how frequently the population moves (activity marginal). However, location aggregates naturally leak this information. In fact, much of its utility is derived from these marginal statistics. Therefore, while one can mitigate the ZK MIA by perturbing the aggregate to the point where the basic mobility patterns of its population are unrecoverable, doing so may also destroy the aggregates' utility.

In light of these results, we advise that data practitioners be mindful of the parameters that they select for $\epsilon-$DP, because DP does not guarantee sufficient protection from MIAs if the parameters are chosen too loosely. Since data practitioners often prioritize utility, it is common to pick more relaxed parameters for the privacy unit (e.g. event or user-day instead of user) and budget $\varepsilon$ (e.g. $\epsilon > 1$). For example, \citet{kohli2023privacy} studied $\epsilon \in \{0.1, 0.5, 1.0\}$ at the event level in the context of aggregate O-D mobility matrices, Facebook used $\epsilon = 0.45$ at the event level when collecting data about URLs shared on the site, and Apple uses $\epsilon$ between $2$ and $16$ at the user-day level when collecting IOS data \cite{desfontaines2021list}. Recall that ZK and KK achieved $AUC \ge 0.9$ on the CDR dataset whenever the noise scale $\frac{\Delta}{\epsilon} $ was $2$ or less. This corresponds to $\epsilon \ge 0.5$ for event level DP and $\epsilon \ge 10$ for user-day level DP with up to $\Delta =20$ daily visits. ZK and KK therefore both achieved high discrimination on privacy settings that are in line with many real-life applications. 

However, we do observe DP mitigating the MIAs when we pick sufficiently strict parameters. For example, no MIA achieved better than random performance for $\epsilon = 0.5$ in the user-day setting. We also note that we did not evaluate using the user level setting, which would achieve the strongest privacy protection. We note that the suitability of privacy parameters depends on the desired utility and sensitivity of the dataset. A stricter parameter choice is particularly relevant if the aggregate is publicly released and/or pertains to sensitive data.  

Although $\varepsilon$-DP always offers privacy guarantees, our experimental results emphasize the importance of picking appropriate parameters. In particular, we observed that event-level DP was largely ineffective in preventing MIAs from both strong and weak adversaries. We instead encourage the use of user-day or user-level DP with carefully selected privacy budgets $\varepsilon$ to mitigate the practical threat of an MIA.

\subsection{Limitations} 
We have so far taken the Knock-Knock MIA to refer to the Subset of Locations setting~\cite{pyrgelis2017knock}. We now address why we do not consider the Knock-Knock Participation in Past Groups~\cite{pyrgelis2017knock} threat model in this paper. Under the Participation in Past Groups setting, the adversary has access to a set of past aggregates $\{\overline{A}^{\tilde{\Users_1}}, ..., \overline{A}^{\tilde{\Users_N}}\}$, collected over the same ROIs $\Si$ as the released aggregate $\overline{A^\Users}$. Moreover, $Adv$ is assumed to know the membership status of the target $u^*$ in each of these aggregates. That is, $Adv$ knows whether or not $u^* \in \Users_i$ for all $i=1, ..., N$. This last assumption is crucial because $Adv$ directly uses $\{\overline{A}^{\tilde{\Users_1}}, ..., \overline{A}^{\tilde{\Users_N}}\}$ as their training data for the membership classifier in this setting.
This is unrealistic for multiple reasons. First, to train an effective membership classifier, there would need to be hundreds of labeled aggregates to have sufficient training data. More importantly, there would be no reason for the membership status of an individual within an aggregate to be released in practice. We argue that the only plausible scenario in which $Adv$ would know the membership status of each aggregate is if they created the aggregates themselves. This reduces to the Subset of Locations setting that we have assumed in this paper.

In terms of limitations for our ZK MIA, recall that the Delaunay triangulation of the ROIs, $DT(\Si)$, is the only non-probabilistic parameter used to generate synthetic traces for the ZK MIA. The triangulation only depends on the locations of the ROIs, which we have so far assumed to be shared as part of the aggregate release (Section \ref{subsec: threat_model}). We believe this to be a realistic assumption, as omitting the locations of the ROIs would strongly diminish the utility of aggregate location data. Nonetheless, there might exist cases where released location aggregates do not relay the positions of the ROIs. For example, Google binned ROIs into categories, ex. restaurants, parks, and hospitals, when publicly releasing their mobility report during COVID \url{https://www.google.com/covid19/mobility/}. In this setting, the adversary would proceed without knowing where the ROIs are situated with respect to one another. The privacy risk under this setting is not known, and we identify it as an area of future research. Similarly, there might exist cases where the adversary knows the ROIs that were visited by the target (ex. home and work), but not the visitation times. We show in Appendix ~\ref{appendix:onlyrois} that only knowing the visited ROIs substantially reduces the effectiveness of both MIAs.

ZK MIA also requires that we estimate statistical parameters from the released aggregate. It may be difficult to estimate these precisely if the aggregate size is small or if the collected location data is not regular. However, we still observe strong performance by ZK MIA on both datasets for small aggregate sizes (see Appendix~\ref{appendix:varyingm}).

Furthermore, aggregate location data collected over large metropolitan populations are known to obey high regularity across different cities and time periods. These patterns include log-normal activity distributions~\cite{schneider2013unravelling, seshadri2008mobile, farzanehfar2021risk} and periodic "circadian rhythm" time marginals~\cite{csaji2013exploring, song2010limits, seshadri2008mobile, farzanehfar2021risk}. This suggests that our statistical parameter estimation should be highly transferable across sufficiently regular datasets. However, we acknowledge that there are scenarios where the observed population is not regular (e.g. taxi drivers). 

\subsection{Generalizations to MIAs in other Settings}
In this paper, we have proposed a new methodology to perform membership inference attacks on aggregate data, by training the attack on synthetic records, generated from the released aggregate. We believe that this approach can be adapted for MIAs in settings beyond aggregate location data. Our methodology can be broken down into two main steps: 1) extracting noise-less global statistics from the released aggregate, 2) use these statistics to create individual-level records to train the MIA. 

In the setting of location data, the relevant statistics pertain to the mobility trends of large-scale human populations \cite{schneider2013unravelling, seshadri2008mobile, farzanehfar2021risk,csaji2013exploring, song2010limits, seshadri2008mobile}, and individual location ~\cite{kulkarni2017generating,kulkarni2018generative,ouyang2018non, karagiannis2007power, lee2009slaw} which have both been well established in the literature. This facilitates both steps of our methodology, as we 
know in advance what location data should look like at both the global and individual level. 

Although the trends will be distinct from aggregate location data, aggregate releases for other types of data will generally reveal global statistics. For instance, categorical tabular data is modeled by discrete random variables, whereas location data is modeled by continuous random variables, and approximated by high-dimensional discrete data. Our methods for denoising and debiasing statistics from differentially private and suppressed aggregates are however not specific to location data, and should generalize to other data releases. Regarding the second step, using the statistics to create individual records for training, the probabilistic method used for our ZK MIA, drawing from the Delaunay triangulation and the relevant marginal distributions, is partially specific to location data. One would thus need to carefully consider the statistical properties of the type of data to create high quality individual records.

\section{Conclusion}
\label{sec:conclusion}
Aggregate location data is widely shared and used by governments~\cite{NYT2021DIASurveillance, hope2021millions, oli2021canada}, companies~\cite{apple2017learning, aktay2020google, herdaugdelen2021protecting}, and researchers~\cite{trasberg2023spatial, jeffrey2020anonymised, kohli2023privacy} because of its insights into human behaviour and its presumed security against reidentification.

In this paper, we demonstrated that aggregate location data is susceptible to MIAs by realistic adversaries, who only know some of their target's location history. With ZK MIA, we introduced the first MIA on aggregate location data that does not require an auxiliary dataset. We accomplished this by generating appropriate synthetic traces, using statistics that are estimated from the released aggregate. We also equipoed our parameter estimation with techniques that automatically correct for bias and noise from popular privacy mechanisms like suppression of small counts and $\varepsilon$-DP noise.

We then showed that MIAs on aggregate location data are significantly improved by incorporating a paired sampling technique, which helps isolate the effect of the target trace within a high dimensional aggregate. Hence, the vulnerability of aggregate location data is further heightened by these improved attacks.

Our evaluations over two large datasets demonstrate that, despite the absence of an auxiliary dataset, ZK MIA performs as well as the state-of-the-art KK MIA, with both MIAs achieving high discrimination when commonly used privacy settings are applied. ZK MIA remains effective in realistic privacy settings, even when only a small fraction ($10\%$) of the target trace is known. These results emphasize the need for strict differential privacy guarantees on released aggregate location data.

Taken together, our findings show that membership inference attacks are not merely a theoretical privacy threat posed by unrealistically strong adversaries, but also a realistic threat to contend with in practice.

\begin{acks}
Ana-Maria Cretu did most of her work while she was at Imperial College London and was partially funded by the Agence Française de Développement via the Flowminder Foundation. 

The authors would like to sincerely thank the Flowminder team for their support with this work, in particular Galina Veres and James Harrison for their help in designing a secure data-sharing pipeline to test the MIAs on the CDR dataset. The authors would like to further acknowledge Cyril Miras for his early work on MIAs on aggregate location data, including the implementation of the baseline rule MIA. Finally, the authors would like to thank the anonymous reviewers and shepherd for their feedback on the paper.
\end{acks}

\bibliographystyle{apalike} 
\bibliography{references.bib}

\appendix
\addcontentsline{toc}{section}{Appendices}

\section*{Appendix}
\label{sec:appendix}
\begin{table}[!h]
 \caption{Glossary of notations.}
  \centering
    \resizebox{0.5\textwidth}{!}{%
  \begin{tabular}{ll} \\
    \toprule
    \textbf{Notation}  & \textbf{Definition}\\
    \midrule
    $\Si$ & Set of regions of interests (ROIs) \\
    $\T$ & Set of epochs in observation period  \\
    $\Pop$ & Set of all users in the dataset \\
    $L^u$ & Location trace of user $u \in \Pop$ over $\Si \times \T$ \\
    $\Users$ & Aggregation group of users sampled from $\Pop$\\
    $A^{\Users}$ & Raw aggregate count matrix in $\Si \times T$ over users in $\Users$ \\
    ${A}_{DP}^{\Users}(\varepsilon)$ & An $\varepsilon$-DP aggregate\\[1em]
    $A_{SSC}^{\Users}(k)$ & An aggregate with counts $\le k$ suppressed \\[1em]
    ${A}_{DP, SSC}^{\Users}(\varepsilon, k)$ & An $\varepsilon$-DP aggregate with counts $\le k$ suppressed \\[1em]
    $\overline{A}^{\Users}$ & The released aggregate count matrix \\
    $m$ & Number of users in the aggregation group $\Users$\\
    $u^*$ & Target drawn from full population, $u^* \in \Pop $\\
    $Adv$ &  Adversary performing MIA on $u^*$\\
    \bottomrule \\
  \end{tabular}
  }\label{tab:glossary_notations}
\end{table}

\begin{table}
 \caption{Default experiment parameters.}
  \centering
  \begin{tabular}{ll} \\
    \textbf{Default value}  & \textbf{Definition}\\
    \hline
     $n_{train} = 400$ & Number of training aggregates \\
     $n_{val} = 100$ & Number of validation aggregates\\
     $n_{test} = 100$ & Number of test aggregates \\
     $n_{target} = 50$ & Number of targets \\
     $m=1000$ & Aggregate size  \\
     $|Ref|=5000$ (CDR),  & Traces in $Adv$'s real (KK) \\
     $2500$ (Milan)  &or synthetic (ZK) reference\\
     $p_{u^*}= 1$ & Fraction of $L^{u^*}$ known by $Adv$\\
    \hline \\
  \end{tabular}
  \label{tab:parameters}
\end{table}

\section{Supplementary Proofs}

\label{sec: proofs}
\begin{definition} (Oracle average count) Given a raw aggregate $A^\Users$, we define the oracle average count function $\mu_{S}: \Si \to \R^+$ as 
\begin{align}
    \mu_{S}(s) &=\lim_{|\T|\to \infty}\frac{\sum_{t=1}^{|\T|} A_{s,t}^\Users}{|\T|}.  
\end{align}
Letting $|\T| \to \infty$ corresponds to extending the observation period indefinitely. Thus, $\mu_{S}(s)$ represents the expected number of users $\Users$ who visit ROI $s$ at a randomly selected epoch, given infinite location data over the ROIs $\Si$.
\end{definition}

\begin{definition} (Strong sparsity)
We say that $A^\Users$ is \textit{strongly sparse} if
\begin{align}
    \mu_{S}(s) = 0 \text{ } \forall s \in \Si
\end{align}
Equivalently, $\sum_{t=1}^{|\T|} A_{s,t}^\Users \in o(|\T|)$ $\forall s \in \Si$. This is a strong assumption, as it implies that the visitation rate to each ROI decreases at a sublinear rate.

\begin{lemma}
Given a fixed geographic region in which location data is collected,
\begin{align}
    \lim_{|\Si| \to \infty}\frac{\sum_{s=1}^{|\Si|} A_{s,t}^\Users}{|\Si|}=0
\end{align}
\label{lemma:visitation_rate_space}
\end{lemma}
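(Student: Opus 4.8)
The plan is to bound the numerator $\sum_{s=1}^{|\Si|} A^\Users_{s,t}$ by a constant that does not depend on $|\Si|$, and then invoke the squeeze theorem. First I would fix an arbitrary epoch $t \in \T$. By definition of the aggregate, $A^\Users_{s,t}$ counts the users in $\Users$ who visited ROI $s$ during epoch $t$, so $\sum_{s=1}^{|\Si|} A^\Users_{s,t}$ equals the total number of distinct (user, ROI) visit incidences recorded at epoch $t$ across the aggregation group.

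The key observation is that this total is bounded independently of $|\Si|$. Since $\Users$ consists of $m$ users and data is collected over a \emph{fixed} geographic region, each user records only a bounded number of geo-tagged visits within a single epoch, and each such recorded visit is assigned to exactly one ROI regardless of how finely the region is partitioned. Writing $v_{\max}$ for the maximum number of ROIs that any single user visits in one epoch, we obtain the uniform bound $\sum_{s=1}^{|\Si|} A^\Users_{s,t} \le m \, v_{\max} =: B$, where $B$ is a constant that does not depend on $|\Si|$. Combining these, for every $|\Si|$ we have
\begin{align}
0 \le \frac{\sum_{s=1}^{|\Si|} A^\Users_{s,t}}{|\Si|} \le \frac{B}{|\Si|},
\end{align}
and letting $|\Si| \to \infty$ sends the right-hand side to $0$. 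The squeeze theorem then yields the claim.

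I expect the main obstacle to be the justification that the numerator stays bounded as $|\Si| \to \infty$, rather than any analytic difficulty. This is precisely where the ``fixed geographic region'' hypothesis is indispensable: enlarging $|\Si|$ corresponds to refining the spatial partition of a fixed region, which cannot increase the number of visits that a fixed set of $m$ users records in a given epoch, since each recorded visit maps to a single ROI at any resolution. I would state this modeling assumption explicitly — that per-epoch activity is bounded uniformly in the spatial resolution — because the limit $|\Si| \to \infty$ is otherwise ill-posed. Notably, strong sparsity is not needed for this lemma: that assumption concerns time-averaging ($\sum_t A^\Users_{s,t} \in o(|\T|)$), whereas the bound here follows directly from the finiteness of the population $\Users$ together with the boundedness of per-epoch activity, giving a spatial analogue of sparsity at each fixed epoch.
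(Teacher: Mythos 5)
Your proof is correct and follows essentially the same route as the paper's: both arguments rest on the observation that, since the region, the users, and the epoch are fixed, the numerator $\sum_{s=1}^{|\Si|} A^\Users_{s,t}$ cannot grow as the spatial partition is refined (the paper asserts it is invariant; you give the slightly more careful statement that it is bounded by the fixed number of recorded events), so dividing by $|\Si| \to \infty$ forces the limit to $0$. Your explicit flagging of the modeling assumption that each recorded visit maps to exactly one ROI at any resolution is a welcome tightening of the paper's one-line justification, but it is the same idea, not a different proof.
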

\begin{proof}
${\sum_{s=1}^{|\Si|} A_{s,t}^\Users}$ corresponds to the number of users who registered a visit during epoch $t$. Letting $|\Si| \to \infty$ corresponds to increasing creating finer regional partitions within the fixed geographic region. ${\sum_{s=1}^{|\Si|} A_{s,t}^\Users}$ is invariant to increasing $|\Si| \to \infty$, since the same users are observed over the same time. It follows that  $\lim_{|\Si| \to \infty}\frac{\sum_{s=1}^{|\Si|} A_{s,t}^\Users}{|\Si|}=0$.
\end{proof}

\end{definition}

\begin{theorem}(Convergence of empirical marginals to uniform distribution under $\varepsilon$-DP)
    Let $\Delta > 0$ be the global sensitivity and suppose that $\varepsilon$-DP is applied on an aggregate release $\overline{A}^\Users=A_{DP}^{\Users}(\varepsilon)$ with post-processed non-negative counts. If the original raw counts $A^\Users$ are strongly sparse, then the empirical space and time marginals, $\mathcal{P}_{S}^0$ and $\mathcal{P}_{T}^0$, each converge to discrete uniform distributions:
    \begin{itemize}
        \item $
\widehat{\mathcal{P}}_{S}^0 \to Unif(\Si)$ in distribution as $|\T| \to \infty$
\item  $\widehat{\mathcal{P}}_{T}^0 \to Unif(\T)$ in distribution as $|\Si| \to \infty$
    \end{itemize}
    \label{theorem:marg_conv_unif}
\end{theorem}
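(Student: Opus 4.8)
The plan is to exploit the fact that the stated limit laws are deterministic, so that convergence in distribution of the random pmfs $\widehat{\mathcal{P}}_S^0$ and $\widehat{\mathcal{P}}_T^0$ is equivalent to convergence in probability of each of their (finitely many) coordinates to the corresponding uniform mass. The guiding intuition is that, because the raw counts are strongly sparse, the genuine mobility signal accumulated in any row or column grows only sublinearly, whereas the post-processed Laplace noise contributes a strictly positive amount to \emph{every} entry on average. Hence, after normalization, every ROI (resp. epoch) receives asymptotically the same ``noise mass'', flattening the empirical marginal to uniform. I would first isolate this common noise contribution by setting
\begin{equation}
  c = \E\!\left[\min\!\big(m,\max(0,\lfloor X\rfloor)\big)\right], \qquad X \sim Lap(\Delta/\varepsilon),
\end{equation}
the expected post-processed value of an entry whose true count is zero, and note that $c>0$ since $X\ge 1$ with positive probability.

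For the space marginal, fix an ROI $s$ and write the unnormalized row sum as $N_s=\sum_{t=1}^{|\T|}\overline{A}^\Users_{s,t}$. Each summand is an independent random variable bounded in $[0,m]$ by the post-processing. I would split the sum according to whether the true count $A^\Users_{s,t}$ is zero or positive. Entries with zero true count are i.i.d.\ with mean $c$; entries with positive true count number at most $\sum_t A^\Users_{s,t}\in o(|\T|)$ by strong sparsity, and each contributes at most $m$. Using boundedness to bound the per-term variance by $m^2/4$, Chebyshev's inequality yields $N_s/|\T| \xrightarrow{p} c$, the sublinear signal block vanishing in the limit. Since $|\Si|$ is finite, the denominator satisfies $\lVert\overline{A}^\Users\rVert_1/|\T| = \sum_{s'}N_{s'}/|\T| \xrightarrow{p} |\Si|\,c$, and the continuous mapping theorem applied to the ratio (valid as $|\Si|\,c>0$) gives $\widehat{\mathcal{P}}_S^0(s)\xrightarrow{p} 1/|\Si|$ for every $s$.

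The time marginal is handled symmetrically, now taking $|\Si|\to\infty$ with $|\T|$ fixed. The only change is the source of sparsity: instead of strong sparsity I would invoke Lemma~\ref{lemma:visitation_rate_space}, which gives $\sum_s A^\Users_{s,t}\in o(|\Si|)$, so the number of positive-count entries in any column is again sublinear in the growing dimension. The identical splitting-and-Chebyshev argument then gives $\sum_s\overline{A}^\Users_{s,t}/|\Si|\xrightarrow{p} c$ and $\lVert\overline{A}^\Users\rVert_1/|\Si|\xrightarrow{p}|\T|\,c$, whence $\widehat{\mathcal{P}}_T^0(t)\xrightarrow{p}1/|\T|$.

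I expect the main obstacle to be the row-sum limit $N_s/|\T|\to c$, since the summands are independent but \emph{not} identically distributed (their means depend on the underlying true counts). The key is that strong sparsity confines all the heterogeneity to an $o(|\T|)$ block, while the uniform $[0,m]$ bound from post-processing controls the variance, so a weak law for bounded independent variables suffices. Verifying that the limit $c$ is genuinely common across all ROIs and epochs (it is, because it depends only on the noise law, not on location) and that $c>0$ so the normalizing denominator stays bounded away from zero are the remaining points needing care, but both follow directly from the structure of the clipped Laplace noise.
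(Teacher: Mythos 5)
Your proof is correct, but it takes a genuinely different route from the paper's. The paper works entry-wise with only the non-negativity clipping: it decomposes $(A_{s,t}^{\Users}+L_{s,t})\vee 0 = (L_{s,t}\vee 0) + X_{s,t}$ with $0 \le X_{s,t} \le A_{s,t}^{\Users}$, kills the $X$-block by strong sparsity, and applies the strong law of large numbers to the i.i.d.\ clipped-Laplace terms, whose mean $\E[L\vee 0]=\frac{b}{2}$ is computed explicitly in Lemma~\ref{lemma:Laplace_mean}; this yields almost-sure convergence of each coordinate and hence the stronger conclusion. You instead reduce convergence in distribution to coordinatewise convergence in probability (valid since the limit is deterministic), split each row/column into zero-count and positive-count entries, and run a Chebyshev weak law, never computing the limit constant $c$ — you only need that $c>0$ and that it is common to all entries. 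Your argument is more robust to the noise law (nothing Laplace-specific is used beyond $P(X\ge 1)>0$), and it sidesteps the explicit integral, but it buys this by leaning on the full post-processing of Section~\ref{subsec:DP} (clipping at $m$ and rounding), which is what gives you the $[0,m]$ bound underlying both the $o(|\T|)\cdot m$ estimate for the heterogeneous block and the variance bound $m^2/4$. The paper's proof assumes only clipping at zero, under which your argument as written breaks; it is repairable (replace the uniform bound by the finite variance $\var(L\vee 0)\le b^2$ and bound the positive-count block by $\sum_t A^{\Users}_{s,t} + \sum_{t\,\in\,\mathrm{pos}} L_{s,t}\vee 0$), so you should state explicitly which post-processing convention your proof requires. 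One further point of care, which affects both proofs equally: for finite dimensions the normalizing denominator can vanish with positive probability, so the empirical marginal should be taken as defined on the event that it is positive, an event whose probability tends to one.
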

\begin{proof}
We first consider $\widehat{\mathcal{P}}_{S}^0$. It suffices to show that as $\epsilon \to 0$, $
\widehat{\mathcal{P}}_{S}^0(s_0) \xrightarrow{\text{a.s.}} \frac{1}{|\Si|}$ for each $s_0 \in \Si$.

Let $\epsilon>0$ and let $b = \frac{\Delta}{\epsilon}$. Recall that $\varepsilon$-DP with post-processed non-negative counts is obtained by $\overline{A}^\Users_{s,t}=(A_{s,t}^{\Users} + L_{s,t}(b)) \vee 0$, where $A_{s,t}^{\Users}$ is the true number of visits by users in $\Users$ to $(s,t)$ and $\{L_{s,t} \sim Lap(b): s \in \Si, t \in \T\}$ are i.i.d Laplacian noise samples (Section \ref{subsec:DP}). By definition, 
\begin{align*}
\widehat{\mathcal{P}}_{S}^0(s_0) &= \frac{\sum_{t=1}^{|\T|}\overline{A}^\Users_{s_0,t}}{\sum_{s=1}^{|\Si|}\sum_{t=1}^{|\T|}\overline{A}^\Users_{s,t}}\\
&=\frac{\sum_{t=1}^{|\T|}\left((A_{s_0,t}^\Users + L_{s_0,t}(b)) \vee 0\right)}{\sum_{s=1}^{|\Si|}\sum_{t=1}^{|\T|}\left((A_{s,t}^\Users + L_{s,t}(b)) \vee 0\right)}\\
&= \frac{\frac{\sum_{t=1}^{|\T|}\left((A_{s_0,t}^\Users + L_{s_0,t}) \vee 0\right)}{|\T|}}{\sum_{s=1}^{|\Si|}\frac{\sum_{t=1}^{|\T|}\left((A_{s,t}^\Users + L_{s,t}) \vee 0\right)}{|\T|}}.
\end{align*}

We now express $(A_{s,t}^{\Users} + L_{s,t}) \vee 0 = L_{s,t} \vee 0 + X_{s,t}$, for some $X_{s,t}$, in order to apply Lemma \ref{lemma:Laplace_mean} later. Since $A_{s,t}^{\Users} \ge 0$, there are three cases:
\[
X_{s,t}=
\begin{cases}
 A_{s,t}^{\Users}, & \text{if $L_{s,t}  \ge 0$ }  \\
 A_{s,t}^{\Users} + {L}_{s,t}, & \text{if $L_{s,t}  < 0$ and $(A_{s,t}^{\Users} + L_{s,t} ) \vee 0 > 0$} \\
0, & \text{if $L_{s,t}  < 0$ and $(A_{s,t}^{\Users} + L_{s,t} ) \vee 0 = 0$ }
\end{cases}
\]
We therefore have
\begin{align}
    \widehat{\mathcal{P}}_{S}^0(s) &= \frac{\frac{\sum_{t=1}^{|\T|}X_{s_0,t}}{|\T|} + \frac{\sum_{t=1}^{|\T|}L_{s_0,t}  \vee 0}{|\T|}}{\sum_{s=1}^{|\Si|}\left(\frac{\sum_{t=1}^{|\T|}X_{s,t}}{|\T|} + \frac{\sum_{t=1}^{|\T|}L_{s,t}  \vee 0}{|\T|}\right)}.
\end{align}
By sparsity, for each $s \in \Si$
\begin{align}
     \frac{\sum_{t=1}^{|\T|}A_{s,t}^\Users}{|\T|} \to 0 \text{ as $|\T| \to \infty$}
\end{align}
Also, by the Strong Law of Large Numbers, since $\{L_{s,t}\}$ are  i.i.d., and $\E[Lap(b)] = 0$, we have 
\begin{align*}
     \frac{\sum_{t=1}^{|\T|}L_{s,t}}{|\T|} \xrightarrow{\text{a.s.}} 0 \text{ as $|\T| \to \infty$}
\end{align*}
By linearity,
\begin{align*}
     \frac{\sum_{t=1}^{|\T|} A_{s,t}^\Users + {L}_{s,t}}{|\T|} \xrightarrow{\text{a.s.}} 0 \text{ as $|\T| \to \infty$}
\end{align*}
Hence, in all three possible cases,
\begin{align*}
    \frac{\sum_{t=1}^{|\T|}X_{s,t}}{|\T|} \xrightarrow{\text{a.s.}} 0 \text{ as $|\T| \to \infty$}
\end{align*}
This allows us to simplify
\begin{align*}
    \widehat{\mathcal{P}}_{S}^0(s) &\xrightarrow{\text{a.s.}} \frac{\frac{\sum_{t=1}^{|\T|}L_{s_0,t} \vee 0}{|\T|}}{\sum_{s=1}^{|\Si|} \frac{\sum_{t=1}^{|\T|}L_{s,t} \vee 0}{|\T|}}.
\end{align*}
Since for all $s,t$, $L_{s,t} \vee 0 \sim Lap(b) \vee 0$, Lemma \ref{lemma:Laplace_mean} implies $\E[L_{s,t} \vee 0] = \frac{b}{2}$. Hence, by the Strong Law of Large Numbers,
\begin{align*}
    \frac{\sum_{t=1}^{|\T|} L_{s,t} \vee 0}{|\T|} \xrightarrow{\text{a.s.}} \frac{b}{2} \text{ as $|\T| \to \infty$}
\end{align*}
Finally, for any set of ROIs $\Si$, and any $s \in \Si$,
\begin{align*}
    \widehat{\mathcal{P}}_{S}^0(s) &\xrightarrow{\text{a.s.}} \frac{\frac{b}{2}}{\sum_{s=1}^{|\Si|}\frac{b}{2}} = \frac{b}{|\Si|b} = \frac{1}{|\Si|},
\end{align*}
A symmetric argument proves $\widehat{\mathcal{P}}_{T}^0 \to Unif(\T)$ in distribution as $|\Si| \to \infty$, using Lemma \ref{lemma:visitation_rate_space} instead of strong sparsity.
\end{proof}

\textbf{Remark. } We note that strong sparsity is assumed in Eq. (14) to prove that $\frac{\sum_{t=1}^{|\T|}X_{s,t}}{|\T|} \xrightarrow{\text{a.s.}} 0 \text{ as $|\T| \to \infty$}$. Although we expect the oracle average count $\mu_S(s)$ to be very small for most $s \in \Si$, due to the sparsity of aggregate location data, it is unlikely to observe $\mu_S = 0$ for real data. Substituting $\mu_S(s)$ in place of $0$ in Eq. (14) will not yield the uniform probability $\widehat{\mathcal{P}}_{S}^0(s) \xrightarrow{\text{a.s.}} \frac{1}{|\Si|}$, but it will be a close approximation, provided that $\frac{\Delta}{\epsilon} >> \mu_S$ and that the number of epochs is large. 

In practice, fixed dimensions for $S$ and $T$ will prevent the empirical marginals from completely converging to the uniform distribution. This is demonstrated for different noise scales on the Milan dataset (which has $|S|=100$ and $|T|=168$) in Figure \ref{fig:Milan_laplace}.

\begin{lemma}
Suppose that $Y \sim L \vee 0$, with $L \sim Lap(b)$. Then, $Y$ has mean
\begin{align*}
    \E[Y] = \frac{b}{2}
\end{align*}
\label{lemma:Laplace_mean}
\end{lemma}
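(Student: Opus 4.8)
The plan is to compute the expectation directly from the density of the Laplace distribution. Recall that $L \sim Lap(b)$ has probability density $f_L(x) = \frac{1}{2b} e^{-|x|/b}$ on $\R$, which is symmetric about the origin with $\E[L] = 0$. Since $Y = L \vee 0$ equals $L$ on the event $\{L \ge 0\}$ and equals $0$ on $\{L < 0\}$, only the positive half of the density contributes to $\E[Y]$, and the atom that $Y$ places at $0$ contributes nothing.

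First I would write $\E[Y] = \int_{0}^{\infty} x \, f_L(x)\,dx = \frac{1}{2b}\int_0^\infty x\, e^{-x/b}\,dx$, discarding the negative part since $Y = 0$ there. Evaluating the elementary integral $\int_0^\infty x\, e^{-x/b}\,dx = b^2$ (by integration by parts, or by recognizing it as $b^2\,\Gamma(2)$) then gives $\E[Y] = \frac{1}{2b}\cdot b^2 = \frac{b}{2}$, as claimed.

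Alternatively, and perhaps more cleanly, I would bypass the integral entirely using the decomposition $\max(L,0) = \frac{1}{2}(L + |L|)$, so that linearity of expectation yields $\E[Y] = \frac{1}{2}\bigl(\E[L] + \E[|L|]\bigr)$. The symmetry of the Laplace law gives $\E[L] = 0$, while the standard fact that $|L| \sim \mathrm{Exp}(1/b)$ is exponential with mean $b$ gives $\E[|L|] = b$; substituting these produces $\E[Y] = \frac{b}{2}$ immediately. Either route works, and the exponential-truncation viewpoint ($L \mid L \ge 0 \sim \mathrm{Exp}(1/b)$ with $\Pr(L \ge 0) = \tfrac12$) makes the answer $\tfrac12 \cdot b$ almost self-evident.

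There is essentially no obstacle here: the statement is a one-line consequence of the symmetry and mean-absolute-deviation of the Laplace distribution. The only point requiring minor care is the treatment of the truncation at $0$, but both approaches handle it correctly, since $Y$ vanishes identically on $\{L < 0\}$ and so no boundary term is lost.
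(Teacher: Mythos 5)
Your first computation is exactly the paper's proof: the paper likewise writes down the Laplace density, notes that the truncation at zero places probability mass $\tfrac12$ at the origin (contributing nothing to the mean), and evaluates $\frac{1}{2b}\int_0^\infty x e^{-x/b}\,dx = \frac{b}{2}$ by integration by parts. Your alternative route via the identity $\max(L,0) = \tfrac{1}{2}(L + |L|)$, combined with $\E[L]=0$ and $\E[|L|]=b$, is a genuinely cleaner argument that the paper does not use --- it avoids the integral and the explicit handling of the atom at zero entirely, and would arguably be the better proof to include. Both routes are correct; note only that the paper's stated density contains a sign typo ($e^{|x|/b}$ rather than $e^{-|x|/b}$), which your version fixes.
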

\begin{proof}
Let $L\sim Lap(b)$. Then, its probability density function (pdf) $f_L$ is given by 
\begin{align*}
    f_L(x) =
\frac{1}{2b}e^{\frac{|x|}{b}} & \text{ for } x \in \R
\end{align*}
which is symmetric about $x=0$. Hence, $P(X \le 0) = \frac{1}{2}$. It follows that $Y = X \vee 0$ has the pdf $f_Y$
\[
f_Y(x) =
\begin{cases}
0, & \text{for } x < 0 \\
\frac{\delta(x)}{2}, & \text{for } x = 0 \\
\frac{1}{2b}e^{-\frac{x}{b}}, & \text{for } x > 0
\end{cases}
\]
where \( \delta(x) \) is the Dirac delta function representing the accumulated probability mass at zero. We then evaluate
\begin{align*}
    \E[Y] &= \frac{1}{2b} \int_{0}^{\infty} xe^{-\frac{x}{b}}dx\\
    &= \frac{1}{2b}\left(-bxe^{-\frac{x}{b}} \bigg|_{0}^{\infty} +b\int_{0}^\infty e^{-\frac{x}{b}}dx\right)\\
    &=\frac{1}{2b}\left(-b^2e^{-\frac{x}{b}}\bigg|_{0}^{\infty} \right)\\
    &=\frac{1}{2b}\left(b^2\right) = \frac{b}{2}
\end{align*}

\end{proof}

\section{Algorithms}
In this section, we present the main algorithms required to generate synthetic traces from the released aggregate for our ZK MIA.

 Algorithm \ref{alg:synthetic_generation} describes how we adapted the unicity model from ~\citet{farzanehfar2021risk} to generate synthetic traces for ZK MIA. We note that the procedure for generating a synthetic trace can also be interpreted as running a Markov chain $\{X_i: i=1, ..., n_{visits}\}$ over the state space of spatiotemporal pairs $(s,t) \in C(s_0) \times \T$ with transition probabilities to $(s', t') \in C(s_0) \times \T$ proportional to the product of the pmfs $\mathcal{P}_S(s')\mathcal{P}_T(t')$.

Algorithm \ref{alg:marginals} estimates the three marginal probability distributions required to run  Algorithm \ref{alg:synthetic_generation}: the space marginal $\mathcal{P}_S$, the time marginal $\mathcal{P}_T$, and the activity marginal $\mathcal{P}_A$ from an aggregate release $\overline{A}^U$. We estimate the marginals via our denoising and debiasing techniques (from Section \ref{subsec: est_space_time_marginals}), depending on the application of privacy measures on $\overline{A}^U$.

Algorithm \ref{alg:estimate_mean}  describes our procedure for achieving an estimate $\hat{\mu}$ for the mean number of visits per user in the dataset given a privacy-aware aggregate release. Recall that $\mathcal{P}_A$ is set to $Exp(\hat{\mu})$. Algorithm \ref{alg:p_selection} describes our procedure for computing which degree $p$ will work best in the power transformation, to correct the empirical marginal $\widehat{\mathcal{P}}_{\underscore}^0$ obtained directly from a $\varepsilon$-DP aggregate release.

\begin{algorithm}[h!]
    \caption{\textsc{GenerateSyntheticTrace}}
    \label{alg:synthetic_generation}
    \begin{algorithmic}[1]
        \Inputs{$\mathcal{P}_S$: Approximated space marginal over ROIs\\ $\mathcal{P}_T$: Approximated time marginal over epochs \\ 
        $\mathcal{P}_A$: Approximated activity marginal over trace sizes\\ $DT(\Si)$: Delaunay triangulation of ROIs
        }
        \Output{$L^s$: A synthetic trace}
        \comm{We sample an origin ROI.}
        \State{$s_0$ $\leftarrow$ sample\_from\_distribution($\mathcal{P}_S, 1$)}\\
        \comm{Use $DT(\Si)$ to create a connected subgraph of ROIs including the origin ROI}
        \State{$C(s_0)$ $\leftarrow$ generate\_connected\_subgraph($s_0$, $DT(\Si)$, n\_rois\_subgraph $=10$ (default value from \cite{farzanehfar2021risk})}\\
        \comm{Normalize $\mathcal{P}_s$ restricted to $C(s_0)$.}
        \State{$\mathcal{P}_{C(s_0)}\leftarrow$ normalize(restrict($\mathcal{P}_s$, $C(s_0)$)) }\\
        \comm{Sample the trace size (\# visits).}
        \State{n\_visits $\leftarrow$ round(sample\_from\_distribution($\mathcal{P}_A, 1$))}\\
        \comm{Randomly sample n\_visits ROIs and epochs with replacement.}
        \State{ROIs $\leftarrow$ sample\_from\_distribution($\mathcal{P}_{C(s_0)}$, n\_visits)}
        \State{epochs $\leftarrow$ sample\_from\_distribution($\mathcal{P}_T$, n\_visits)}\\
        \Return{$L^s$ $\leftarrow$ [ (ROIs[i], epochs[i]) for i = 1 ... n\_visits ]}
    \end{algorithmic}
\end{algorithm}

\begin{algorithm}[h!]
    \caption{\textsc{Approximate Marginals From Aggregate}}
    \label{alg:marginals}
    \begin{algorithmic}[1]
        \Inputs{$\overline{A}^{\Users}:$ Released aggregate\\
        $m$: Aggregate group size\\
        $p$: Specified probability distribution family}
        \Output{$\mathcal{P}_S$: Approximated space marginal over ROIs\\ 
        $\mathcal{P}_T$: Approximated time marginal over epochs \\ 
        $\mathcal{P}_A$: Approximated activity marginal over trace sizes}
        \comm{Compute direct estimates.}
        \State{$\mathcal{P}_S, \mathcal{P}_T$ $\leftarrow$ compute\_empirical\_marginals($\overline{A}^\Users$)}
        \State{$\mu_{visits}^0$ $\leftarrow$ sum\_entries$(\overline{A}^\Users) / m$}
        \If{$\overline{A}^\Users=A^{\Users}$}\\
            \comm{Return direct estimates if no privacy.}
            \State{$\mathcal{P}_A$ $\leftarrow$ fit\_dist($p$, $\mu_{visits}^0$)}
            \State \Return{$\mathcal{P}_S, \mathcal{P}_T,\mathcal{P}_A$}
        \EndIf
        \If{$\overline{A}^\Users=A_{SSC}^{\Users}(k)$  and $k>0$}\\
            \comm{Apply log compression if SSC.}
            \State{$\mathcal{P}_S, \mathcal{P}_T$ $\leftarrow$ log\_compression$(\mathcal{P}_S, \mathcal{P}_T)$}
        \EndIf
        \If{$\overline{A}^\Users=A_{DP}^{\Users}(\varepsilon)$ or $\overline{A}^\Users=A_{DP, SSC}^{\Users}(\varepsilon,k)$}\\
            \comm{Apply power transformation if DP.}
            \State{$\mathcal{P}_S, \mathcal{P}_T$ $\leftarrow$ power\_transform$(\mathcal{P}_S, \mathcal{P}_T)$}
            \EndIf\\
            \comm{Apply Algorithm \ref{alg:estimate_mean} from Appendix.}
            \State{$\mu_{visits}$ $\leftarrow$ estimate\_mean($\mu_{visits}^0$, $A$, $m$, $\mathcal{P}_S, \mathcal{P}_T$, $k$, $\varepsilon$)}
            \State{$\mathcal{P}_A$ $\leftarrow$ fit\_dist($p$, $\mu_{visits}$)}
            \State \Return{$\mathcal{P}_S, \mathcal{P}_T,\mathcal{P}_A$}
    \end{algorithmic}
\end{algorithm}

\begin{algorithm}[h!]
    \caption{\textsc{EstimateMean}}
    \label{alg:estimate_mean}
    \begin{algorithmic}[1]
        \Inputs{$\mu_0$: Initial guess for mean visits\\
        $A:$ Released  aggregate\\
        $DT(\mathcal{S})$: Delaunay triangulation of ROIs\\
        $\mathcal{P}_s$: Estimated space marginal\\
        $\mathcal{P}_t$: Estimated time marginal\\
        $k$: Suppression threshold\\
        $\varepsilon, \Delta$: DP parameters\\
        $m$: Aggregate group size}\\
        \textbf{Additional parameters}
        \State{tol: Tolerance for stopping}
        \State{max\_iter: Max iterations}
        \Output{$\mu$: Approximated mean visits per user}
        \State{$\mu$ $\leftarrow$ $\mu_0$}
        \For{$i=1$ to max\_iter}
        \State{$A_{1}$ $\leftarrow$ initialize\_matrix()}\\
        \comm{Create a synthetic aggregate of size $m$.}
        \For{$j=1$ to $m$}\\
        \comm{Generate synthetic trace via with $\mu$ vistis.}
        \State{$A_1$ $\leftarrow$ $A_1 + $ generate\_synthetic\_trace($\mathcal{P}_s$, $\mathcal{P}_t$, $\mu$, $DT(\Si)$)}
        \EndFor
        \State{$A_1$ $\leftarrow$ apply\_privacy\_measures($A_1$, $k$, $\epsilon$, $\Delta$)}\\
        \comm{Increase or decrease the estimate $\mu$ accordingly}
        \State{$\mu \leftarrow \mu_0 +$ (sum($A$)- sum($A_1$))/$m$}
        \If{$|\mu-\mu_0| < tol$}
        \Return{$\mu$}
        \EndIf
        \EndFor\\
        \Return{$\mu$}
    \end{algorithmic}
\end{algorithm}

\begin{algorithm}[h!]
    \caption{\textsc{pSelection}}
    \label{alg:p_selection}
    \begin{algorithmic}[1]
        \Inputs{$\sigma_0$: Reference variance\\
        $\mathcal{P}$: Space or time marginal to be modified\\
        $\epsilon_{tol}$: Error tolerance}
        \Output{$p$: Degree for transformation $x^p$ that sets the variance of $\mathcal{P}$ to approximately match $\sigma_0$}
        \State{$\sigma$ $\leftarrow$ $compute\_variance(\mathcal{P})$}
        \comm{We compute the variance from the original marginal value.}
        \State{$p$ $\leftarrow$ 1}
        \While{$|\sigma_0 - \sigma| > \epsilon_{tol}$}
        \State{$\sigma$ $\leftarrow$ $compute\_variance(pow(\mathcal{P},p))$}\\
        \comm{Increment the power $p$ until estimate is in range.}
        \State{$p$ $\leftarrow$ $p+0.01$}
        \EndWhile\\
        \Return{$\sigma$}
    \end{algorithmic}
\end{algorithm}

\newpage

\section{Accuracy Results}
\label{sec: accuracy}

In this section, we present the accuracy scores of ZK MIA and KK MIA for the experiments on suppression of small counts and $\varepsilon$-DP noise addition.

Table \ref{tab:bucket_MIAs} presents the accuracy scores obtained by ZK and KK from the experiments on suppression of small counts from Section \ref{exp:ssc}. Table \ref{tab:epsilon_MIAs} presents the accuracy scores obtained by ZK and KK from the experiments on event level $\varepsilon$-DP from Section \ref{exp:DP}. Table \ref{tab:epsilon_MIAs_user_day} presents the accuracy scores obtained by ZK and KK from the experiments on user-day level $\varepsilon$-DP.

We observe that the accuracy scores of KK and ZK are close in each experiment, as observed already with the AUC metric in the main text.

\begin{table}[htbp]
\centering
\caption{Mean accuracy scores with standard error for KK and ZK on size $1000$ aggregates from the CDR and Milan datasets across various suppression thresholds $k$. } 
\label{tab:bucket_MIAs}
\begin{tabular}{lcccc}
\toprule
\multirow{2}{0pt}{\textbf{$k$}} & \multicolumn{2}{c}{\textbf{CDR dataset}}  & \multicolumn{2}{c}{\textbf{Milan dataset}}\\
& \textbf{KK} & \textbf{ZK} & \textbf{KK} & \textbf{ZK}\\

\midrule
0 & $0.980 \pm 0.012$ & $\it{0.991 \pm 0.008}$ & $0.990 \pm 0.005$ & $\it{0.990 \pm 0.003}$\\
1 & $\it{0.907 \pm 0.028}$& $0.879 \pm 0.025$& $\it{0.767 \pm 0.018}$ & $\it{0.700 \pm 0.017}$\\
2 & $0.807 \pm 0.031$& $\it{0.827 \pm 0.026}$& $\it{0.683 \pm 0.019}$ & $\it{0.631 \pm 0.013}$\\
3 & $\it{0.685 \pm 0.032}$& $0.687 \pm 0.031$& $\it{0.600 \pm 0.016}$ & $\it{0.550 \pm 0.009}$\\
4 & $\it{0.597 \pm 0.024}$& $0.603 \pm 0.027$& $\it{0.566 \pm 0.011}$ & $\it{0.512 \pm 0.003}$\\
5 & $0.543 \pm 0.018$ & $\it{0.528 \pm 0.019}$& $\it{0.536 \pm 0.010}$ & $\it{0.500 \pm 0.000}$\\
\bottomrule
\end{tabular}
\end{table}

\begin{table}[htbp]
\centering
\caption{Mean accuracy scores with standard error for KK and ZK on size $1000$ aggregates from the CDR and Milan datasets across various privacy budgets $\varepsilon$ for event level DP.} 
\label{tab:epsilon_MIAs}
\begin{tabular}{lccccc}
\toprule
\multirow{2}{0pt}{\textbf{$\varepsilon$}} & \multicolumn{2}{c}{\textbf{CDR dataset}}  & \multicolumn{2}{c}{\textbf{Milan dataset}}\\
& \textbf{KK} & \textbf{ZK} & \textbf{KK} & \textbf{ZK}\\

\midrule
 0.1 & $\it{0.588 \pm 0.019}$ & $0.555 \pm 0.014$& $\it{0.539 \pm 0.006}$ & $\it{0.549 \pm 0.007}$\\
 0.5 & $\it{0.848 \pm 0.028}$ & $0.791 \pm 0.019$& $\it{0.744 \pm 0.010}$ & $\it{0.634 \pm 0.007}$\\
 1.0 & $0.920 \pm 0.026$ & $0.907 \pm 0.018$& $\it{0.850 \pm 0.016}$ & $\it{0.594 \pm 0.008}$\\
 5.0 & $0.906 \pm 0.035$ & $0.934 \pm 0.019$& $\it{0.881 \pm 0.022}$ & $\it{0.660 \pm 0.018}$\\
 10.0 & $0.923 \pm 0.029$ & $0.934 \pm 0.018$& $\it{0.920 \pm 0.021}$ & $\it{0.671 \pm 0.019}$\\
\bottomrule
\end{tabular}
\end{table}

\begin{table}[htbp]
\centering
\caption{Mean accuracy scores with standard error for KK and ZK on size $1000$ aggregates from the CDR and Milan datasets across various privacy budgets $\varepsilon$ for user-day level DP.} 
\label{tab:epsilon_MIAs_user_day}
\begin{tabular}{lcccc}
\toprule

\multirow{2}{0pt}{\textbf{$\varepsilon$}} & \multicolumn{2}{c}{\textbf{CDR dataset}}  & \multicolumn{2}{c}{\textbf{Milan dataset}}\\
& \textbf{KK} & \textbf{ZK} & \textbf{KK} & \textbf{ZK}\\

\midrule
  0.1 & $0.502 \pm 0.014$ & $\it{0.502 \pm 0.010}$ & $0.497 \pm 0.006$ & $0.496 \pm 0.006$\\
 0.5 & $0.508 \pm 0.014$ & $\it{0.526 \pm 0.016}$ & $0.517 \pm 0.006$ & $0.519 \pm 0.006$\\
 1.0 & $0.533 \pm 0.014$ & $\it{0.539 \pm 0.014}$ & $0.534 \pm 0.006$ & $0.544 \pm 0.007$\\
 5.0 & $\it{0.723 \pm 0.020}$ & $0.676 \pm 0.018$ & $0.746 \pm 0.009$ & $0.680 \pm 0.008$\\
 10.0 & $\it{0.874 \pm 0.025}$ & $0.825 \pm 0.019$ & $0.870 \pm 0.014$ & $0.777 \pm 0.018$\\
\bottomrule
\end{tabular}
\end{table}

\newpage

\section{Additional Experiments}

\subsection{Varying the size of the aggregate}
\label{appendix:varyingm}

Since ZK MIA requires the estimation of statistics from the aggregate, there may be concerns about its performance when the aggregate size is small. However, like previous MIAs, ZK MIA performs more effectively on smaller-scale aggregates compared to larger aggregates. This is shown in Figure \ref{fig:variation_aggregate_size} for aggregate sizes $m=100, 250, 500, 1000$ and different privacy budgets $\varepsilon$.

To further understand how MIA performance scales with aggregate size $m$, we also consider $m > 1000$ in this experiment. To this end, we vary $m= 100, 500, 1000, 2000, 3000$ and compare the performance of KK MIA and ZK MIA on raw ($k=0$) and suppressed ($k=1$) aggregates.  Results on the CDR dataset are reported in Tables~\ref{tab:k0_CDR} and ~\ref{tab:k1_CDR} and results on the Milan dataset are reported in Tables~\ref{tab:k0_Milan} and ~\ref{tab:k1_Milan}. $m=3000$ was not run on the Milan dataset due size limitations.

In these settings with mild privacy protection, the attacks always succeed regardless of the value of $m$. We also observe a few intuitive trends. First, when raw aggregates ($k=0$) are attacked, increasing the size of the aggregates slowly decreases the performance of the attack. On the CDR dataset, KK and ZK attain AUCs $0.999$ and $1.0$ for $m=100$, which decreases to $0.919$ and $0.977$ for $m=3000$. Second, when we apply suppression $k=1$, the attacks initially perform poorly when the aggregate size is small. We hypothesize this to be due to a larger percentage of entries being suppressed when fewer traces are aggregated, leaving less information in the release. This effect gradually decrease as aggregate size increases. It is then counterbalanced by the first effect, that increasing the size of the aggregates slowly decreases the performance of the attack, when aggregate sizes increase. This is visible for $m \leq 1000$ in the CDR dataset. For the Milan dataset, AUC however still monotonically increases even beyond $m \leq 1000$ as the dataset is more sensitive to suppression with the average user has approximately $6$ times less visits, as shown in Table \ref{fig:trace_size_Milan}.

\begin{table}[htbp]
\centering
\caption{Mean AUCs of KK and ZK MIAs for \( k = 0 \) on the CDR dataset with varying \( m \).}
\label{tab:k_0_cdr}
\begin{tabular}{lcc}
\toprule
\textbf{\( m \)} & \textbf{KK} & \textbf{ZK} \\
\midrule
100  & \(1.000 \pm 0.000\) & \(1.000 \pm 0.000\) \\
500  & \(1.000 \pm 0.000\) & \(1.000 \pm 0.000\) \\
1000 & \(1.000 \pm 0.000\) & \(0.999 \pm 0.001\) \\
2000 & \(0.997 \pm 0.003\) & \(0.994 \pm 0.006\) \\
3000 & \(0.988 \pm 0.011\) & \(0.977 \pm 0.021\) \\
\bottomrule
\end{tabular}
\label{tab:k0_CDR}
\end{table}

\begin{table}[htbp]
\centering
\caption{Mean AUCs of KK and ZK MIAs for \( k = 0 \) on the Milan dataset with varying \( m \).}
\label{tab:k_0_milan}
\begin{tabular}{lcc}
\toprule
\textbf{\( m \)} & \textbf{KK} & \textbf{ZK} \\
\midrule
100  & \(1.000 \pm 0.000\) & \(1.000 \pm 0.000\) \\
500  & \(1.000 \pm 0.000\) & \(1.000 \pm 0.000\) \\
1000 & \(0.995 \pm 0.002 \) & \(1.000 \pm 0.000\) \\
2000 & \(0.977 \pm 0.011\) & \(1.000 \pm 0.000\) \\
\bottomrule
\end{tabular}
\label{tab:k0_Milan}
\end{table}

\begin{table}[htbp]
\centering
\caption{Mean AUCs of KK and ZK MIAs for \( k = 1 \) on the CDR dataset with varying \( m \).}
\label{tab:k_1_cdr}
\begin{tabular}{lcc}
\toprule
\textbf{\( m \)} & \textbf{KK} & \textbf{ZK} \\
\midrule
100  & \(0.856 \pm 0.019\) & \(0.779 \pm 0.041\) \\
500  & \(0.961 \pm 0.008\) & \(0.987 \pm 0.003\) \\
1000 & \(0.981 \pm 0.007\) & \(0.976 \pm 0.010\) \\
2000 & \(0.979 \pm 0.010\) & \(0.965 \pm 0.013\) \\
3000 & \(0.973 \pm 0.011\) & \(0.938 \pm 0.016\) \\
\bottomrule
\end{tabular}
\label{tab:k1_CDR}
\end{table}

\begin{table}[htbp]
\centering
\caption{Mean AUCs of KK and ZK MIAs for \( k = 1 \) on the Milan dataset with varying \( m \).}
\label{tab:k_1_milan}
\begin{tabular}{lcc}
\toprule
\textbf{\( m \)} & \textbf{KK} & \textbf{ZK} \\
\midrule
100  & \(0.756 \pm 0.020\) & \(0.701 \pm 0.046\) \\
500  & \(0.889 \pm 0.018\) & \(0.885 \pm 0.025\) \\
1000 & \(0.916 \pm 0.015\) & \(0.919 \pm 0.016\) \\
2000 & \(0.981 \pm 0.009\) & \(0.972 \pm 0.007\) \\
\bottomrule
\end{tabular}
\label{tab:k1_Milan}
\end{table}

 \begin{figure}[h!]
    \centering
    \includegraphics[width = 0.8\linewidth, page=1]{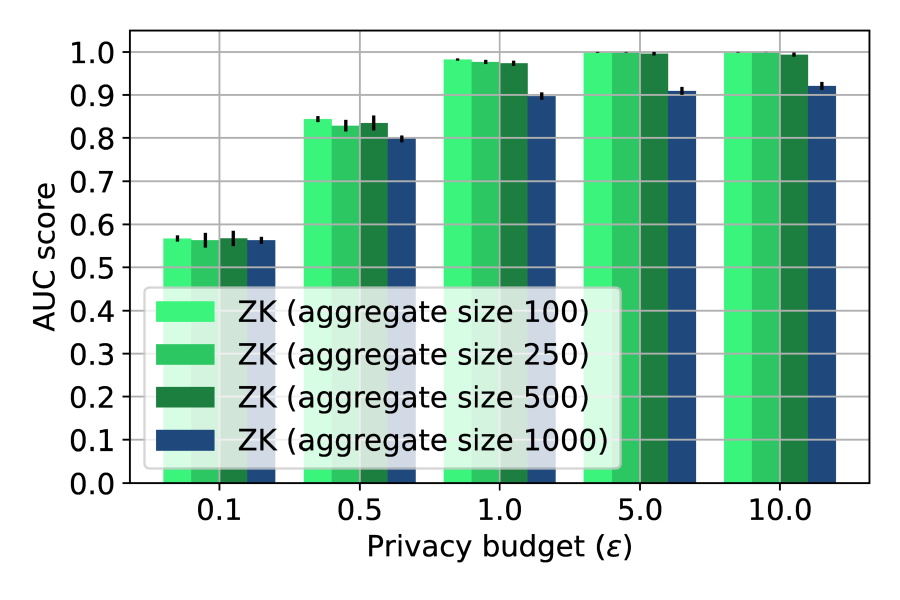}
    \caption{Mean AUC scores with standard error for ZK on event level $\varepsilon$-DP for varying privacy budgets $\varepsilon$ on aggregates of varying sizes from the Milan dataset.}
    \label{fig:variation_aggregate_size}
\end{figure}

\subsection{Increasing the size of ZK synthetic reference}

Figure \ref{fig:n_syn} illustrates how increasing the number of synthetic traces available to the attacker improves the MIA's performance up to marginal returns.

\begin{figure}[h!]
    \centering
    \includegraphics[width = 0.8\linewidth, page=1]{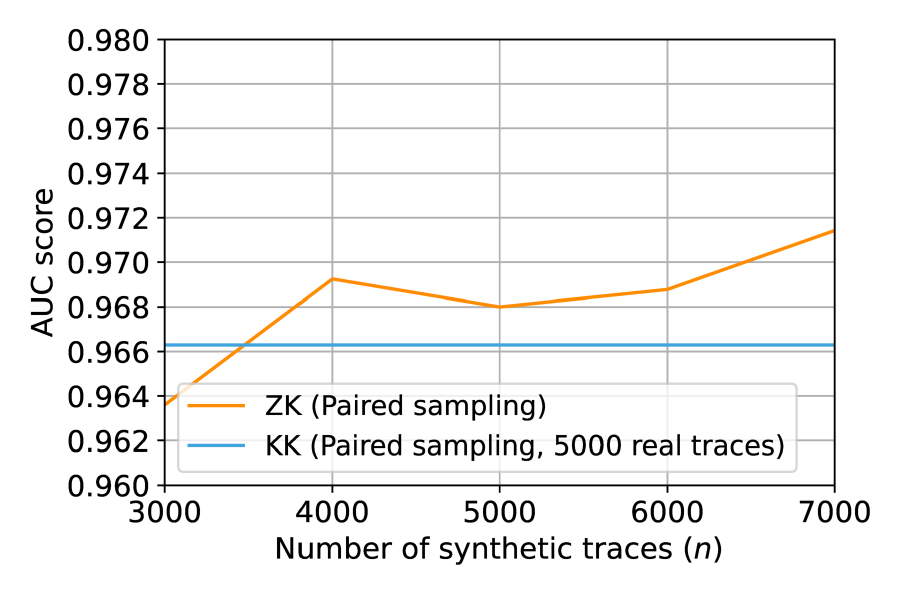}
    \caption{AUC of the Zero Auxiliary Knowledge MIA for different number of synthetic traces generated in the setting $m=1000$, $k=1$, and $\varepsilon=1$-DP at the event level.}
    \label{fig:n_syn}
\end{figure}

\newpage

\subsection{No time information}
\label{appendix:onlyrois}
In this experiment, we now assume that the adversary only has access to some of the locations that the target has visited, without knowing the epochs during which the visits were done. For example, the adversary may know the target's home and work. To model this attack setting, we suppose that the adversary either knows the target's top-$K$ most visited ROIs, for $K=1,2,3$, or the full set of the target's visited ROIs during the observation period. In one implementation, which we call "greedy", the adversary assumes that the target visits each known ROI during every epoch in the observation period. This ensures that the visits to these ROIs are reflected in the target trace, but it also sets many incorrect visits. Results are presented in Table~\ref{tab:greedy_sampling_onlyrois}. In our second implementation, which we call "random sampling", the adversary distributes the target's visits uniformly across the known ROIs. For example, if the adversary knows the top-$3$ ROIs, and their estimate for the mean number of visits per user is $\mu$, then they would sample $\frac{\mu}{3}$ visits for each of the top-$3$ ROIs. The corresponding epochs for each visit are sampled from the estimated time marginal. For simplicity, we assume that $\mu$ is the true mean number of visits and that the estimated time marginal is the true one. Results on raw aggregates of size $m=1000$ are presented in Table~\ref{tab:random_sampling_onlyrois}.

Table~\ref{tab:random_sampling_onlyrois} shows that both MIAs perform poorly ($AUC<0.63$) when the adversary uses random sampling to approximate the target trace. This suggests that random sampling fails to estimate the target trace, due to the omission of true target visits, and the inclusion of incorrect visits. 

In contrast, Table~\ref{tab:greedy_sampling_onlyrois} shows that KK was able to perform significantly better than random when the adversary knew more than $2$ of the target's most visited ROIs and used the greedy implementation (ex. $AUC = 0.86$ on Milan when knowing all visited ROIs). This suggests that, although the greedy implementation includes many incorrect visits, the guaranteed inclusion of some of the target's actual visits enables membership inference to an extent. ZK, on the other hand, fails to attain $0.6$ AUC. Since ZK already replaces real individual traces with synthetic traces, we hypothesize that membership inference becomes too difficult if the estimated target trace contains significantly incorrect information. 

We however note that our current implementation for sampling the visits under this prior knowledge might be suboptimal and that better implementations might exist. For example, \cite{zhang2020locmia} uses a synthetic target trace,  using social network information and the traces of the target's friends. We leave this exploration for future work.

\begin{table*}[htbp]
    \centering
    \begin{tabular}{ccccccccc} \toprule
        \multirow{2.5}{*}{Dataset}&\multicolumn{4}{c}{Knock Knock} & \multicolumn{4}{c}{Zero Auxiliary Knowledge} \\ \cmidrule(r){2-5} \cmidrule(r){6-9}
        &Top 1 & Top 2 & Top 3 & All & Top 1 & Top 2 & Top 3 & All \\ \midrule
        CDR & $0.542 \pm 0.04$ & $0.538 \pm 0.03$ & $0.531 \pm 0.02$ & $0.525 \pm 0.02$ & $0.524 \pm 0.03$ & $0.528  \pm 0.02$ & $0.515 \pm 0.02$ & $0.510 \pm 0.02$ \\
        Milan &  $0.576 \pm 0.02$ & $0.628 \pm 0.03$ & $0.614  \pm 0.03$ & $0.560 \pm 0.03$ & $0.518  \pm 0.02$ & $0.553 \pm 0.02$ & $0.568 \pm 0.02$ & $0.556 \pm 0.04$ \\ \bottomrule
    \end{tabular}
    \caption{Mean AUC scores with standard error for KK and ZK on raw aggregates of size $m=1000$ when the adversary only knows some of the target's visited ROIs and employs the random sampling approach of distributing random visits uniformly across each known ROI.}
    \label{tab:random_sampling_onlyrois}
\end{table*}

\begin{table*}[htbp]
    \centering
    \begin{tabular}{ccccccccc} \toprule
        \multirow{2.5}{*}{Dataset}&\multicolumn{4}{c}{Knock Knock} & \multicolumn{4}{c}{Zero Auxiliary Knowledge} \\ \cmidrule(r){2-5} \cmidrule(r){6-9}
        &Top 1 & Top 2 & Top 3 & All & Top 1 & Top 2 & Top 3 & All \\ \midrule
        CDR & $0.571 \pm 0.05$ & $0.611 \pm 0.05$ & $0.647 \pm 0.06$ & $0.825 \pm 0.05$ & $0.512 \pm 0.02$ & $0.527 \pm 0.01$ & $0.514 \pm 0.02$  & $0.516 \pm 0.01$ \\
        Milan & $0.682 \pm 0.03 $& $0.764 \pm 0.03 $& $0.822 \pm 0.03$ & $0.860 \pm 0.02$ & $0.542 \pm 0.02$ & $0.524 \pm 0.02$ & $0.542 \pm 0.02$ & $0.545 \pm 0.02$ \\ \bottomrule
    \end{tabular}
    \caption{Mean AUC scores with standard error for KK and ZK on raw aggregates of size $m=1000$ when the adversary only knows some of the target's visited ROIs and employs the greedy approach of assuming that the target visits each known ROI during every epoch.}
    \label{tab:greedy_sampling_onlyrois}
\end{table*}

\newpage
\section{Additional Plots}
We present additional figures demonstrating statistics related to the location datasets.

\begin{figure}[h!]
    \centering
    \includegraphics[width = 0.8\linewidth, page=1]{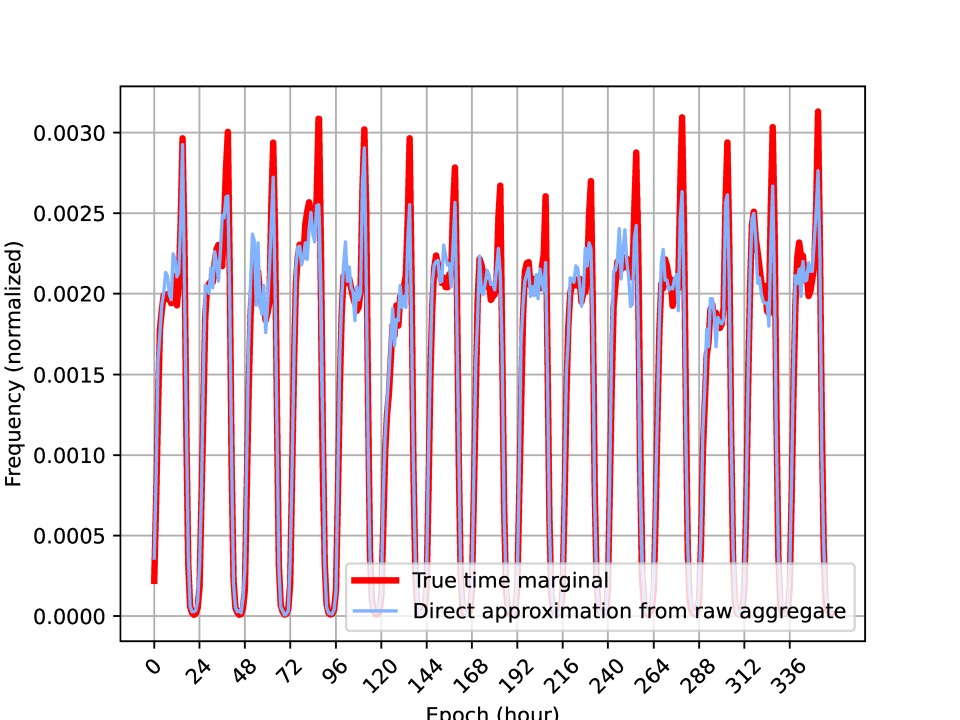}
    \caption{Time marginal from a raw aggregate over $m=1000$ users from the CDR dataset.}
    \label{fig:raw_time_marg}
\end{figure}

\begin{figure}[h!]
    \centering
    \begin{subfigure}[b]{0.49\textwidth}
        \centering
        \includegraphics[width = 0.8\linewidth, page=1]{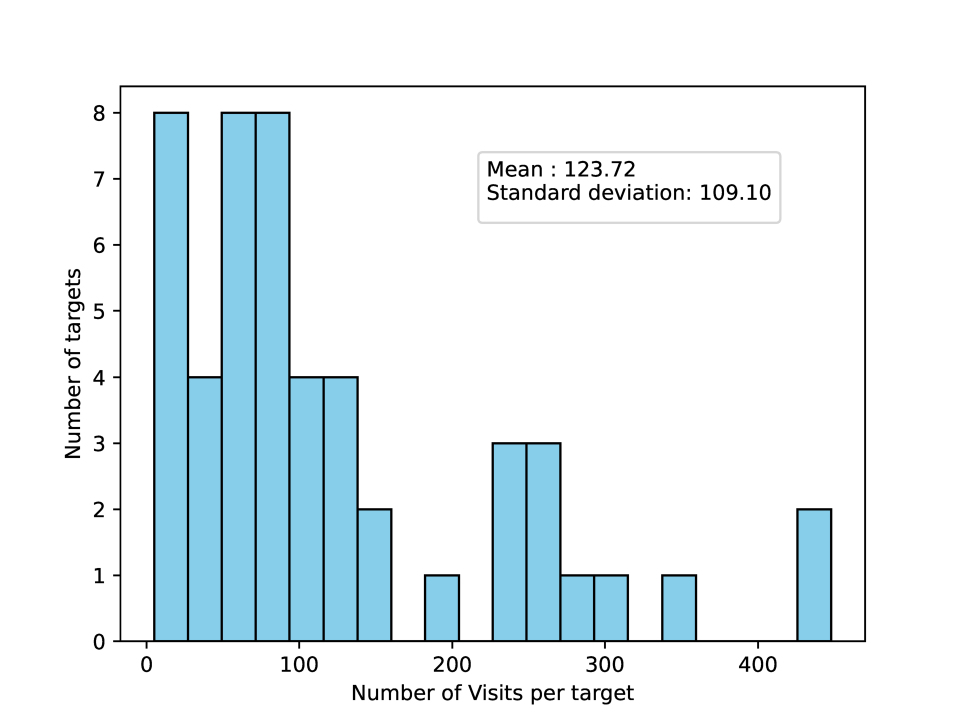}
        \caption{CDR dataset}
        \label{fig:trace_size_CDR}
    \end{subfigure}%
    \hfill 
    \begin{subfigure}[b]{0.49\textwidth}
        \centering
        \includegraphics[width=0.8\textwidth, page=1]{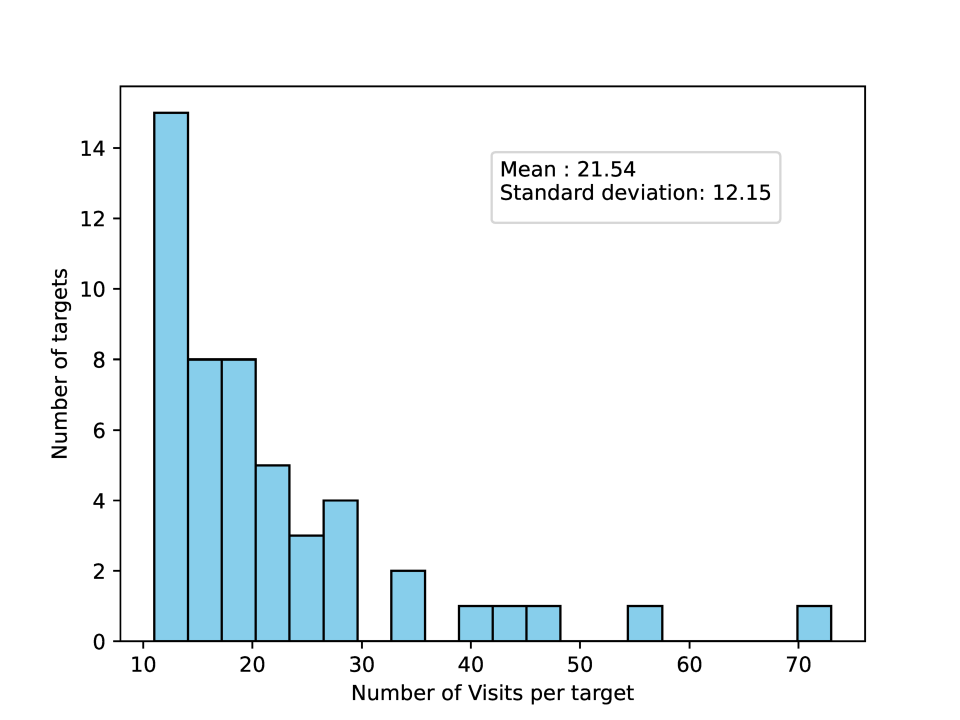}
        \caption{Milan}
        \label{fig:trace_size_Milan}
    \end{subfigure}
    \caption{Number of visits per target over the $50$ targets.}
    \label{fig:target_histogram}
\end{figure}

\begin{figure*}[h!]
    \centering
    \begin{subfigure}[b]{0.49\linewidth}
        \centering
        \includegraphics[width = 0.8\linewidth, page=1]{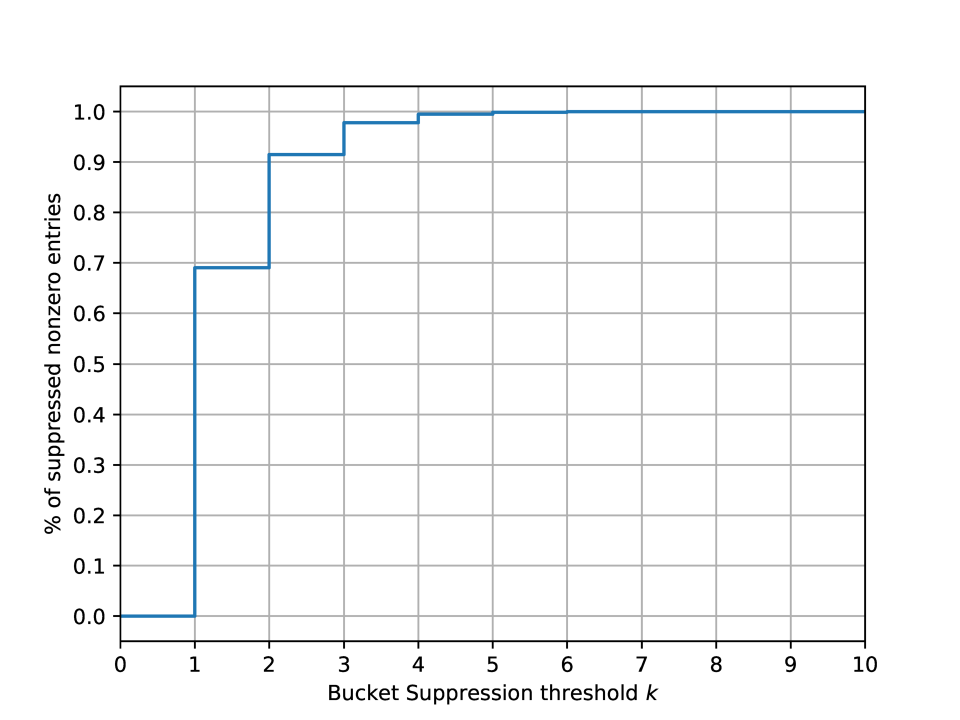}
        \caption{CDR dataset}
        \label{fig:info_loss_CDR}
    \end{subfigure}%
    \hfill 
    \begin{subfigure}[b]{0.49\linewidth}
        \centering
        \includegraphics[width = 0.8\linewidth, page=1]{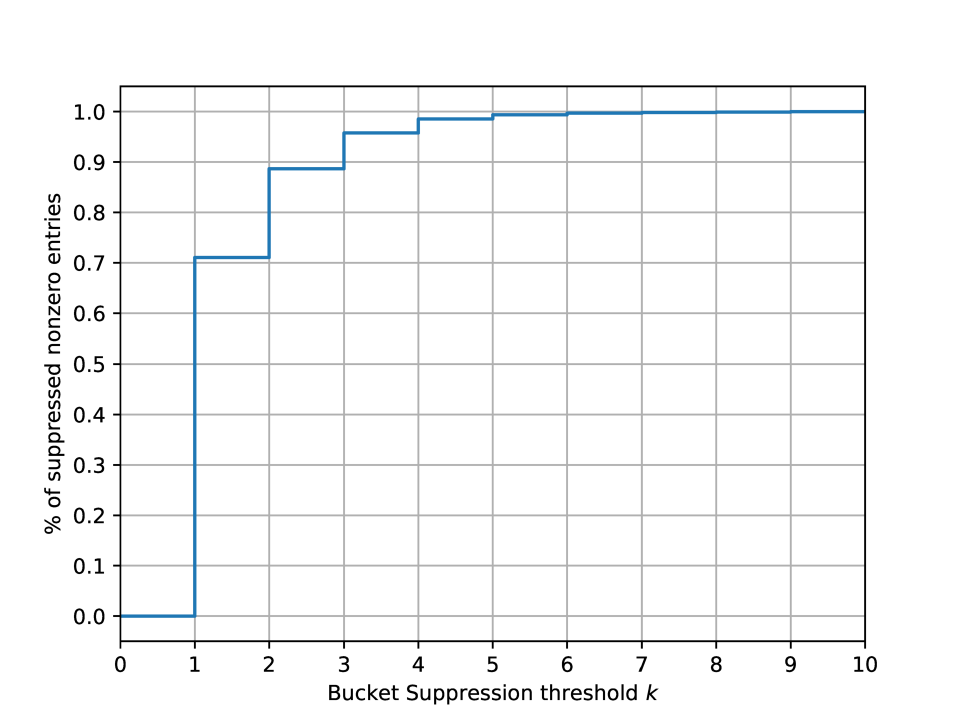}
        \caption{Milan}
        \label{fig:info_loss_Milan}
    \end{subfigure}
    \caption{The percentage of nonzero entries that are suppressed in a size $m=1000$ aggregate after undergoing SSC with threshold $k$ is plotted.}
    \label{fig:bucket_info_loss}
\end{figure*}

\begin{figure*}[ht!]
  \centering
  \begin{subfigure}{0.24\textwidth}
    \includegraphics[width=\linewidth]{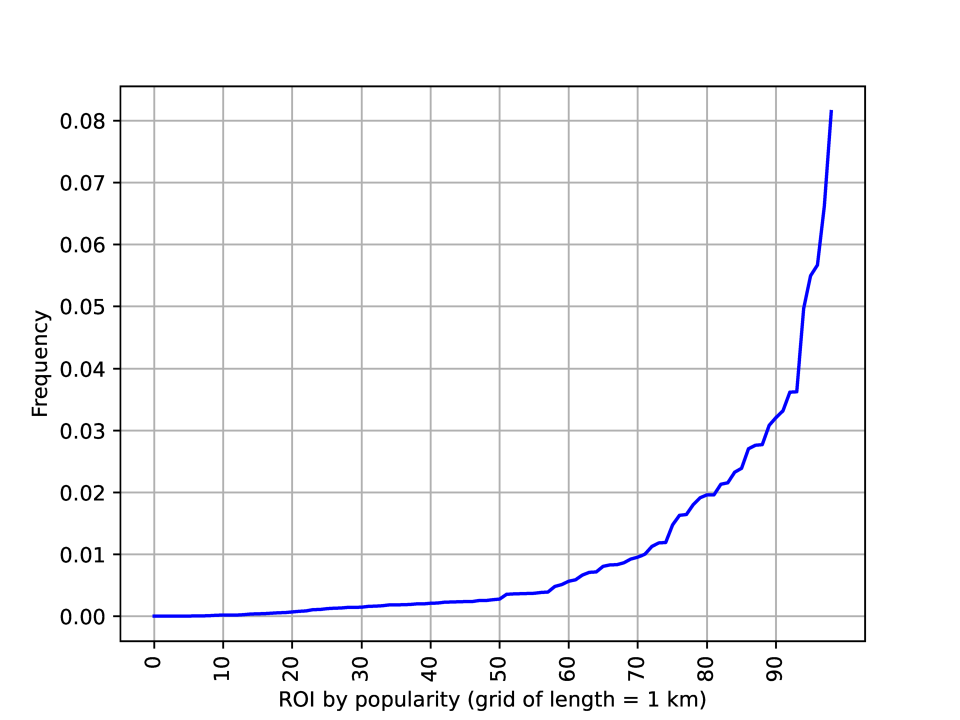}
    \caption{Space Marginal}
    \label{fig:sub-space}
  \end{subfigure}
  \hfill
  \begin{subfigure}{0.24\textwidth}
    \includegraphics[width=\linewidth]{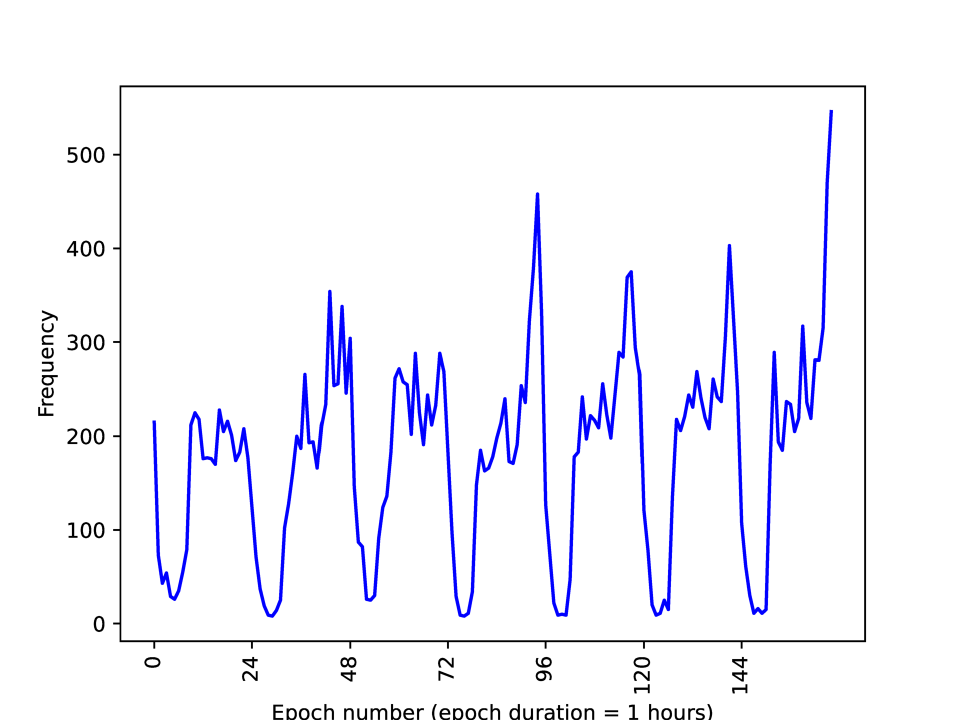}
    \caption{Time Marginal}
    \label{fig:sub-time}
  \end{subfigure}
  \hfill
  \begin{subfigure}{0.24\textwidth}
    \includegraphics[width=\linewidth]{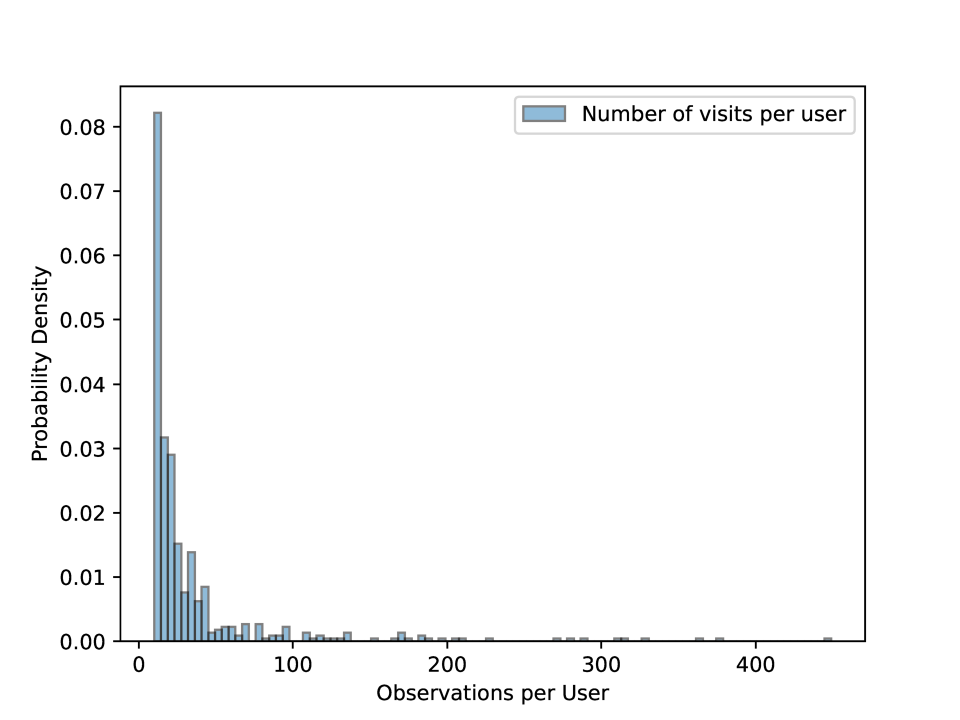}
    \caption{Activity Marginal}
    \label{fig:sub-activity}
  \end{subfigure}
  \hfill
  \begin{subfigure}{0.24\textwidth}
    \centering\includegraphics[width=0.8\linewidth]{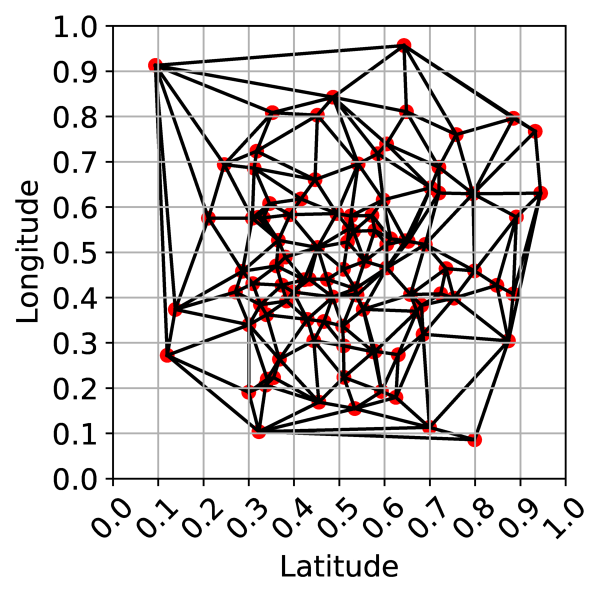}
    \caption{Delaunay Triangulation}
    \label{fig:sub-delaunay}
  \end{subfigure}

  \caption{The four statistical parameters for the unicity model by~\citet{farzanehfar2021risk} include marginal distributions in Figures \ref{fig:sub-space}-\ref{fig:sub-activity} (the dataset's true marginal distributions are shown in red) and the Delaunay triangulation of ROIs in Figure \ref{fig:sub-delaunay}.}
  \label{fig:four_inputs}
\end{figure*}

\begin{figure*}[h!]
    \centering
    \begin{subfigure}[b]{0.75\linewidth}
        \centering
        \includegraphics[width=0.8\textwidth]{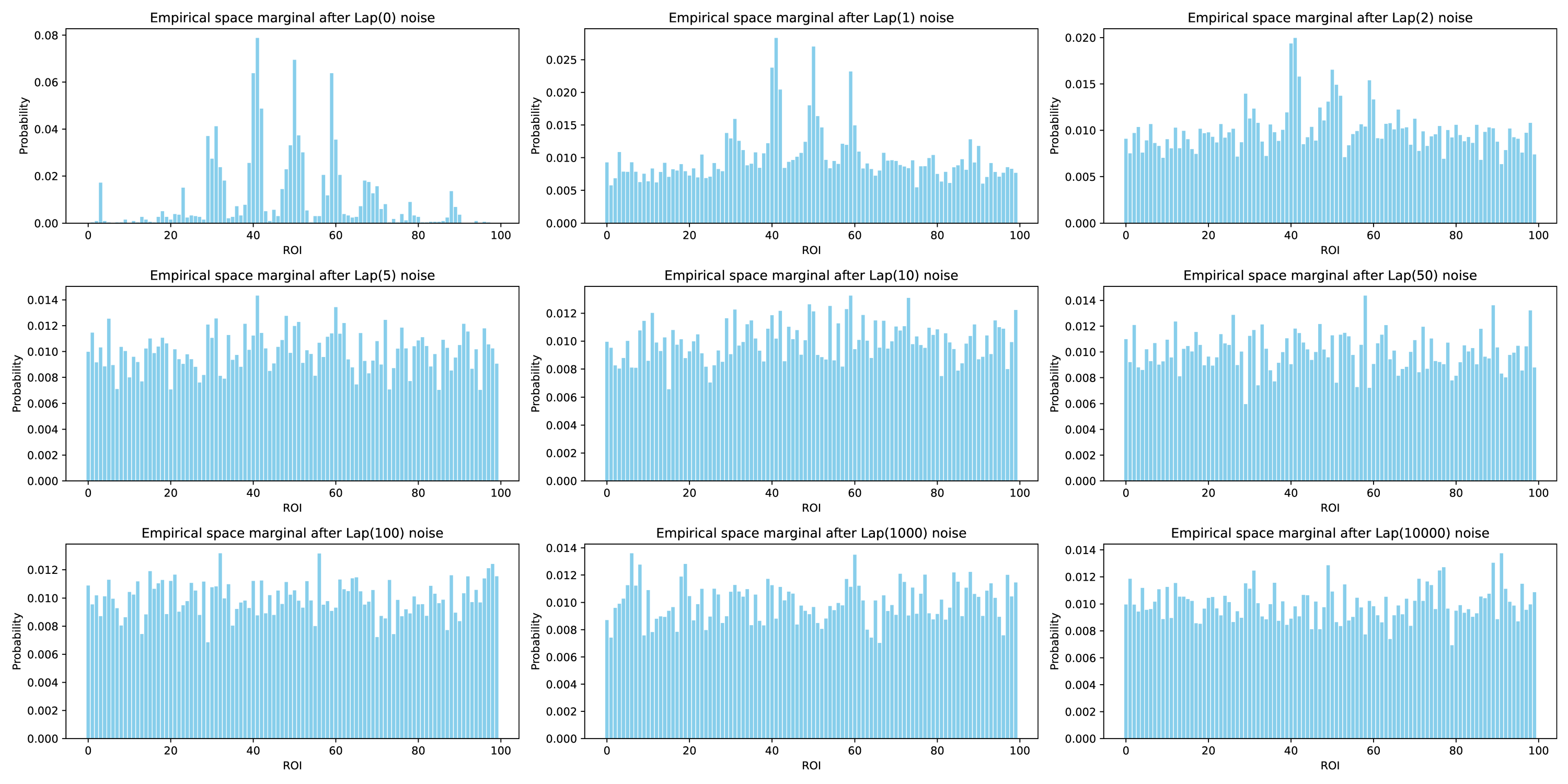}
        \caption{Milan space marginals estimated from $\varepsilon$-DP aggregates }
        \label{fig:Milan_space_laplace}
    \end{subfigure}
    \hfill
    \begin{subfigure}[b]{0.75\linewidth}
        \centering
        \includegraphics[width=0.8\textwidth]{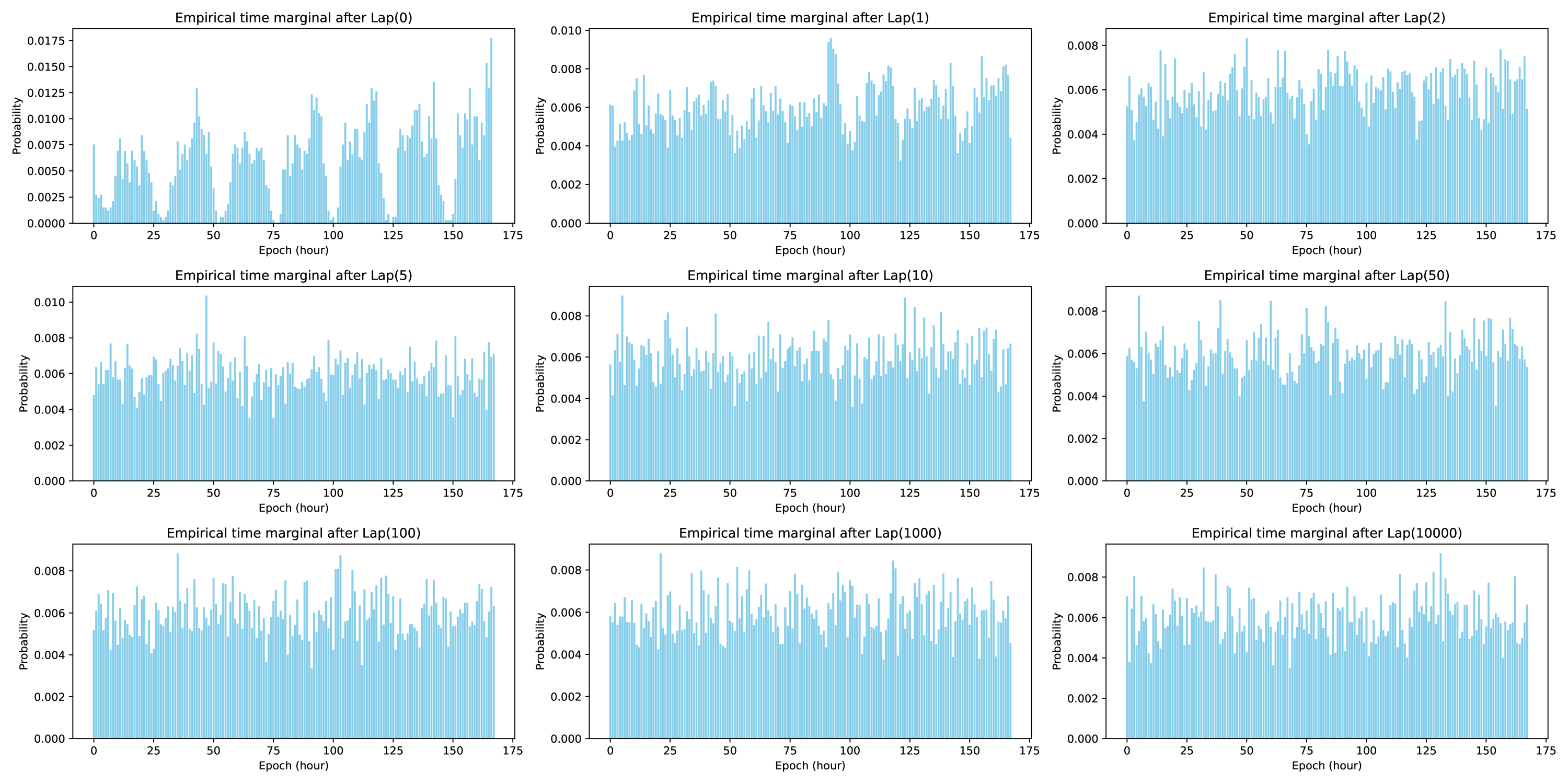}
        \caption{Milan time marginals estimated from $\varepsilon$-DP aggregates}
    \label{fig:Milan_time_laplace}
    \end{subfigure}
    \caption{The space and time marginals directly obtained from $\varepsilon$-DP aggregates over $m=1000$ users from the Milan dataset are plotted for different noise scales $\frac{\Delta}{\varepsilon}$. Interestingly, the distribution does not converge to a uniform distribution as the noise scale increases, due to the increasing variance of $Lap(\frac{\Delta}{\varepsilon})$}
    \label{fig:Milan_laplace}
\end{figure*}

\end{document}